\newcommand{\dbi}[1]{\,\underline{#1}}
\newcommand{\lvec}{\langle}
\newcommand{\rvec}{\rangle}
\newcommand{\ty}[2]{{#1}\!\to\!{#2}}
\newcommand{\ity}[2]{{#1}\land{#2}}
\newcommand{\app}[2]{({#1}\;{#2})}
\newcommand{\tyj}[3]{{#1}\!:\!\lvec{#2}\vdash{#3}\rvec}
\newcommand{\tyjs}[4]{{#1}\!:\!\lvec{#2}\vdash_{\scriptscriptstyle #3}{#4}\rvec}
\newcommand{\cty}[2]{{#1}\!\Rightarrow\!{#2}}
\newtheorem{theorem}{Theorem}
\newtheorem{definition}{Definition}
\newtheorem{example}{Example}
\newtheorem{lemma}{Lemma}
\newtheorem{coro}{Corollary}
\title{Principal Typings in a Restricted Intersection Type System for Beta Normal Forms with de Bruijn Indices}
\author{Daniel Ventura \qquad Mauricio Ayala-Rinc\'on 
\institute{Grupo de Teoria da Computa\c{c}\~ao, Dep. de Matem\'atica\\ Universidade de Bras\'{\i}lia\thanks{First (second) author supported by a CNPq PhD scholarship  (grant) at the Universidade de Bras\'ilia. Work supported by the Funda\c c\~ao de Apoio \`a Pesquisa do Distrito Federal [FAPDF 8-004/2007]}\\ Bras\'{\i}lia D.F., Brasil}
\email{[ventura,ayala]@mat.unb.br}
\and
Fairouz Kamareddine
\institute{School of Mathematical and Computer Sciences\\ 
Heriot-Watt University\\ Edinburgh, Scotland}
\email{fairouz@macs.hw.ac.uk}
}
\begin{document}

\maketitle

\begin{abstract}
The $\lambda$-calculus with de Bruijn indices assembles each
$\alpha$-class of $\lambda$-terms in a unique term, using indices
instead of variable names. Intersection types provide finitary type
polymorphism and can characterise normalisable $\lambda$-terms
through the property that a term
is normalisable if and only if it is typeable. To be closer to
computations and to simplify the formalisation of the atomic
operations involved in $\beta$-contractions, several calculi of
explicit substitution were developed mostly with de Bruijn
indices. Versions of explicit substitutions calculi  without types and
with simple type systems are well investigated in contrast to versions
with more elaborate type systems such as intersection types. In a
previous work, we introduced a de Bruijn version of the
$\lambda$-calculus with an intersection type system and proved that it preserves subject reduction, a basic property of type systems. In this paper a version with de Bruijn indices of an intersection type system originally introduced to characterise principal typings for $\beta$-normal forms is presented. We present the characterisation in this new system and the corresponding versions for the type inference and the reconstruction of normal forms from principal typings algorithms. We briefly discuss the failure of the subject reduction property and some possible solutions for it. 
\end{abstract}

\section{Introduction}

The $\lambda$-calculus \`a la de Bruijn \cite{dB72} was introduced by the
Dutch mathematician  N.G.\ de Bruijn in the context of the project Automath \cite{NGDV} and has been adopted for several calculi of explicit substitutions ever since
(e.g. \cite{dB78,ACCL91,fayo}). Term variables in the $\lambda$-calculus \`a la de Bruijn are represented by indices instead of names, assembling each $\alpha$-class of terms in the
$\lambda$-calculus \cite{Bar1984} in a unique term with de Bruijn
indices, thus turning it more {\it ``machine-friendly''} than its
counterpart. Calculi with de Bruijn indices have been investigated for
both type free and simply typed versions. However, to the best of our
knowledge, apart from \cite{DBLP:journals/cj/KamareddineR02}, there is no work on using de Bruijn indices with more elaborate
type systems such as intersection type systems.  

Intersection types were introduced to provide a characterisation of the strongly normalising 
$\lambda$-terms \cite{CDC1978,CDC1980,Po1980}. In programming, the intersection type discipline is
of interest because $\lambda$-terms corresponding to correct programs not typeable in the standard Curry type assignment system \cite{curfe},
or in extensions allowing some sort of polymorphism as in ML \cite{Milner78}, are
typeable with intersection types. In \cite{VAK2008} an intersection type system for the $\lambda$-calculus with de Bruijn
indices was introduced, based on the type system given in \cite{KaNo2007}, and proved to satisfy the subject reduction property (SR for short); that is the property of preserving types under $\beta$-reduction: whenever $\Gamma\vdash M\;:\; \sigma$ and $M$ $\beta$-reduces into $N$, $\Gamma\vdash N\;:\; \sigma$.  

A relevant problem in type theory is whether the system has principal
typings (PT for short), which means that for any typeable term $M$
there is a type judgement $\Gamma \vdash M :\tau$ representing all
possible typings $(\Gamma',\tau')$ of $M$ in this system. Expansion
variables are an important process for calculating PT
\cite{CarWe2004}. Since \cite{new}  shows that a typing system
similar to that of \cite{VAK2008} would become incomplete if extended with
expansion variables,
 we did not study the PT property for the system
of \cite{VAK2008}.  Instead, we  consider in this paper a restricted
intersection type system for which we are able to establish the 
PT property for $\beta$-normal forms ($\beta$-nf for short). The concept of a {\it most general} typing is usually linked to syntactic operations and they vary from system to system. For example, the operations to obtain one typing from another in simply typed systems are {\it weakening} and {\it type substitutions}, mapping type variables to types, while in an intersection type system {\it expansion} is performed to obtain intersection types replicating a simple type through some specific rules. In \cite{We2002} J. Wells introduced a system-independent definition of PT and proved that it was the correct generalisation of well known system-dependent definitions such as Hindley's PT for simple type systems \cite{Hi97}. The notion of principal typings has been studied for some intersection type systems (\cite{CDV80}, \cite{roc84}, \cite{roc88}, \cite{bakel95}, \cite{KW2004}) and in \cite{CDV80, roc84} it was proved that PT for some term's $\beta$-nf is principal for the term itself. Partial PT algorithms were proposed in \cite{roc88,KW2004}. In \cite{CarWe2004} S.~Carlier and Wells presented the exact correspondence
between the inference mechanism for their intersection type system and the $\beta$-reduction. They introduce the {\em expansion variables}, integrating expansion operations into the type system (see \cite{CarWeITRS04}). 

We present in this paper a de Bruijn version of the intersection type system originally introduced in \cite{SM96a}, with the purpose of characterising  the syntactic structure of PT for $\beta$-nfs. E.~Sayag and M.~Mauny intended to develop a system where, similarly to simply typed systems, the definition of PT only depends on type substitutions and, as a consequence, their typing system in \cite{SM96a} does not have SR. Although SR is the most basic property and should be satisfied by any typing system, the system infers types to all $\beta$-nfs and, because it is a restriction of more complex and well studied systems, is a reasonable way to characterise PT for intersection type systems. In fact, the system in \cite{SM96a} is a proper restriction of some systems presented in \cite{bakel95}.

Below, we give some definitions and properties for the untyped $\lambda$-calculus with de Bruijn indices, as in \cite{VAK2008}. We introduce the type system in Section \ref{typesystems}, where some properties are stated and counterexamples for some other properties, such as SR, are presented. The type inference algorithm introduced here, its soundness and completeness are at the end of Section \ref{typesystems}. The characterisation of PT for $\beta$-nfs and the reconstruction algorithm are presented  in Section \ref{charactyping}. Both algorithms introduced here are similar to the ones presented in \cite{SM96a}.  

\subsection{$\lambda$-calculus with de Bruijn indices}

\begin{definition}\label{def:estrut}
The set of terms $\Lambda_{dB}$ of \textbf{the $\lambda_{dB}$-calculus}, the $\lambda$-calculus with de Bruijn indices, is defined inductively by:
\begin{tabular}{l}
$M,N\,\in\,\Lambda_{dB} ::=\dbi{n}\;|\,(\app{M}{N})\,|\,\lambda.M\quad\mbox{where} \;n\in \mathbb{N}^*{=}\,\mathbb{N}{\smallsetminus}\{0\}.$
\end{tabular}
\end{definition} 
\begin{definition}\label{def:fi}
$FI(M)$, the {\bf set of free indices} of $M\in\Lambda_{dB}$, is defined by:

\begin{tabular}{c}
 $FI(\dbi{n})\!=\!\{\dbi{n}\} \qquad  FI(\app{M_1}{M_2})\!=\! FI(M_1) \cup FI(M_2) \qquad FI(\lambda.M)\!=\!\{ \dbi{n{-}1},\forall \dbi{n}\in FI(M), n >1\}$
\end{tabular}
\end{definition}
 The free indices correspond to the notion of free variables in the $\lambda$-calculus with names, hence $M$ is called closed when $FI(M)=\emptyset$. The greatest value of $FI(M)$ is denoted by $sup(M)$. In \cite{VAK2008} we give the formal definitions of those concepts. Following, a lemma stating properties about $sup$ related with the structure of terms. 
\begin{lemma}[\cite{VAK2008}]\label{lem:supxstruct}
\begin{enumerate}
\item \label{lem:supxapp}
 $sup(\app{M_1}{M_2})=max(sup(M_1),sup(M_2))$.

\item\label{lem:supxabs} 
If  $sup(M)\!=\!0$, then $sup(\lambda.M)\!=\!0$. Otherwise, $sup(\lambda.M)\!=\!sup(M)-1$. 
\end{enumerate}
\end{lemma}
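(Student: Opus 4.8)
The statement is about the function $sup$, which returns the greatest element of the finite set $FI(M)$ (with the understanding that $sup(M)=0$ when $FI(M)=\emptyset$). The plan is to prove both items directly from Definition \ref{def:fi}, reasoning about maxima of the finite sets involved; no induction on term structure is needed since each clause refers to $FI$ of the immediate subterm(s).

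For item \ref{lem:supxapp}, I would start from $FI(\app{M_1}{M_2}) = FI(M_1)\cup FI(M_2)$. Taking the greatest element of a union of two finite sets yields the maximum of the two greatest elements, so $sup(\app{M_1}{M_2}) = max(sup(M_1),sup(M_2))$. The only point deserving a word is the degenerate case where one or both of $FI(M_1)$, $FI(M_2)$ is empty: with the convention $sup(N)=0$ for closed $N$, the identity $max(\cdot,\cdot)$ still holds, since $0$ acts as the neutral element here and $FI(\app{M_1}{M_2})$ is empty exactly when both components are. I would dispatch this with a short case split (both empty; exactly one empty; neither empty).

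For item \ref{lem:supxabs}, I would use $FI(\lambda.M) = \{\dbi{n{-}1} : \dbi{n}\in FI(M),\ n>1\}$. If $sup(M)=0$, then either $FI(M)=\emptyset$, in which case $FI(\lambda.M)=\emptyset$ and $sup(\lambda.M)=0$; or $FI(M)=\{\dbi{1}\}$ (the only way a nonempty set can have greatest element giving $sup$ value $0$ is actually ruled out, since indices are in $\mathbb{N}^*$ — so in fact $FI(M)=\emptyset$ is the only subcase, but I would phrase it carefully: $sup(M)=0$ forces $FI(M)=\emptyset$ because every index is $\geq \dbi{1}$). Either way $sup(\lambda.M)=0$. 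If $sup(M)=k>0$, let $\dbi{k}$ be the greatest free index of $M$; since $k>0$ means $k\geq 1$, and the greatest index being $\dbi{k}$ with $k\geq 1$ — here I must be slightly careful: if $k=1$ then $\dbi{1}$ contributes nothing to $FI(\lambda.M)$, but then $sup(M)=1$ and by convention... actually I should check whether $sup(M)=1$ forces $FI(\lambda.M)=\emptyset$ and hence $sup(\lambda.M)=0=1-1$, which matches. For $k>1$, the map $\dbi{n}\mapsto\dbi{n{-}1}$ is monotone, so the greatest element of $FI(\lambda.M)$ is $\dbi{k{-}1}$, giving $sup(\lambda.M)=k-1=sup(M)-1$. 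So the claim $sup(\lambda.M)=sup(M)-1$ holds in all nonzero subcases.

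The only mildly delicate point — and thus the main obstacle — is bookkeeping around the boundary values: making the convention $sup(N)=0\Leftrightarrow FI(N)=\emptyset$ explicit and checking that the arithmetic $sup(M)-1$ behaves correctly when $sup(M)=1$. Once that convention is pinned down, both parts are immediate from the defining equations for $FI$ together with elementary facts about maxima of finite subsets of $\mathbb{N}^*$. I would present the argument as two short paragraphs with the case analyses above, rather than a formal induction.
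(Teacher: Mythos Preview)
Your proposal is correct. The paper does not actually give a proof of this lemma: it is stated with a citation to \cite{VAK2008} and no argument appears in the present text, so there is nothing to compare your approach against. Your direct reasoning from Definition~\ref{def:fi} via elementary facts about maxima of finite subsets of $\mathbb{N}^*$, including the careful handling of the boundary cases (the convention $sup(N)=0 \Leftrightarrow FI(N)=\emptyset$ and the subcase $sup(M)=1$), is the natural route and is sound.
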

Terms like $((\dots((M_1\,\,M_2)\,\,M_3)\dots)\,\,M_n)$ are written as
$(M_1\,M_2\,\cdots\,M_n)$, as usual. The $\beta$-contraction definition in this notation needs a mechanism which detects and updates free indices of terms. Intuitively, the {\bf lift} of $M$,  denoted by
$M^+$, corresponds to an increment by $1$ of all free indices occurring in $M$. Thus, we are able to present the definition of the substitution used by $\beta$-contractions, similarly to \cite{ARKa2001a}.
\begin{definition}\label{def:subdb}
Let $m, n\in\mathbb{N}^*$. The \textbf{$\mathbf{\beta}$-substitution} for free occurrences of $\dbi{n}$ in $M\in\Lambda_{dB}$ by term $N$, denoted as $\{\underline{n}\,/N\}M$, is defined inductively by
\vspace{-4mm}
{\small
\begin{displaymath}
 \begin{array}{l@{\hspace{3mm}}l}
  \hspace{-2mm}{\mathit 1.}\,\,\{\underline{n}\,/N\}(M_1\,\,M_2)=(\{\underline{n}\,/N\}M_1\,\,\{\underline{n}\,/N\}M_2) &
   {\mathit 3.}\,\, \{\underline{n}\,/N\}\underline{m}=\left\{%
                          \begin{array}{l@{\hspace{1mm}}l}
                           \underline{m-1}\,, & \textrm{if $m>n$}\\
                           N, & \textrm{if $m=n$}\\
                           \underline{m}\,,& \textrm{if $m<n$}
                          \end{array}%
                          \right.\\[-3mm]
 \hspace{-2mm}{\mathit 2.}\,\,\{\underline{n}\,/N\}(\lambda.M_1) = \lambda.\{\underline{n+1}\,/N^+\}M_1&
\end{array}
\end{displaymath}
}
\end{definition} 
\noindent Observe that in item 2 of Definition \ref{def:subdb}, the lift operator is used to avoid captures of free indices in $N$. We present the $\beta$-contraction as defined in \cite{ARKa2001a}.

\begin{definition}
\textbf{$\mathbf{\beta}$-contraction} in the $\lambda_{dB}$-calculus is defined by $(\lambda.M\,N)\!\to_{\beta}\!\{\underline1\,/N\}M$.
\end{definition}  
\noindent Notice that item 3 in Definition \ref{def:subdb} is the mechanism which does the substitution and updates the free
indices in $M$ as consequence of the lead abstractor
elimination. The {\bf $\mathbf{\beta}$-reduction} is defined to be the $\lambda$-compatible closure of the  $\beta$-contraction defined above. A term is in {\bf $\mathbf{\beta}$-normal form}, $\beta$-nf for short, if there is no possible $\beta$-reduction.
\begin{lemma}
A term $N\in \Lambda_{dB}$ is a $\beta$-nf  iff $N$ is one of the following :
\begin{itemize}
\item[-] $N\equiv\dbi{n}$, for any $n\in\mathbb{N}^*$.

\item[-] $N\equiv \lambda.N'$ and $N'$ is a $\beta$-nf.

\item[-] $N\equiv \dbi{n}\,N_1\cdots N_m$, for some $n\in\mathbb{N}^*$ and $\forall 1\!\leq\!j\!\leq\!m$, $N_j$ is a $\beta$-nf.
\end{itemize} 
\end{lemma}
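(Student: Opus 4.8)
The plan is to prove both implications: the ``if'' direction by induction on the structure of $N$, and the ``only if'' direction by structural induction on $N$ combined with a case analysis on the head of an application. The one preliminary fact I would isolate first is this: since the only rewrite rule is the $\beta$-contraction, whose left-hand side is $(\lambda.M\,N)$, and $\beta$-reduction is its $\lambda$-compatible closure, a term $P$ fails to be a $\beta$-nf \emph{if and only if} some subterm of $P$ has the form $(\lambda.M\,N)$; equivalently, $P$ is a $\beta$-nf iff no subterm of $P$ is such an application. This is obtained by unfolding the definition of the compatible closure, and with it both directions become transparent case analyses on the three constructors of $\Lambda_{dB}$, using that an application $(P\,Q)$ is itself a redex exactly when $P$ is an abstraction.

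For the ``if'' direction, I would check the three listed shapes in turn. If $N\equiv\dbi{n}$, then $N$ is neither an application nor does it have a proper subterm, so it contains no redex. If $N\equiv\lambda.N'$ with $N'$ a $\beta$-nf, then any redex subterm of $N$ must lie inside $N'$ (the term $\lambda.N'$ is not itself an application), contradicting $N'$ being a $\beta$-nf. If $N\equiv\dbi{n}\,N_1\cdots N_m$ with each $N_j$ a $\beta$-nf, then, writing $N$ in the fully left-associated form $((\cdots(\dbi{n}\,N_1)\cdots)\,N_m)$ dictated by the notational convention, every subterm that is an application is either one of the $N_j$ (or lies inside one), which are $\beta$-nfs by hypothesis, or is a prefix $((\cdots(\dbi{n}\,N_1)\cdots)\,N_i)$ whose left component is $\dbi{n}$ (when $i=1$) or again such a prefix — in no case an abstraction — so none of these is a redex. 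Hence $N$ is a $\beta$-nf.

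For the ``only if'' direction, assume $N$ is a $\beta$-nf and induct on its structure. If $N\equiv\dbi{n}$ we are in the first case. If $N\equiv\lambda.N'$, then $N'$ is a $\beta$-nf, since a redex in $N'$ would be a redex in $N$; this is the second case. If $N\equiv(P\,Q)$, then both $P$ and $Q$ are $\beta$-nfs, as any redex in either would be a redex in $N$. By the induction hypothesis $P$ has one of the three shapes. It cannot be $\lambda.P'$, for then $(P\,Q)$ would be the redex $(\lambda.P'\,Q)$, contradicting $N$ being a $\beta$-nf. If $P\equiv\dbi{n}$, then $N\equiv\dbi{n}\,Q$ with $Q$ a $\beta$-nf, which is the third case with $m=1$. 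If $P\equiv\dbi{n}\,P_1\cdots P_k$ with each $P_i$ a $\beta$-nf, then by left-associativity $N\equiv\dbi{n}\,P_1\cdots P_k\,Q$, the third case with $m=k+1$ and all arguments $\beta$-nfs. The argument is essentially routine; the only step needing care is the opening equivalence between ``$N$ is a $\beta$-nf'' and ``no subterm of $N$ is of the form $(\lambda.M\,P)$'', everything else being a direct inspection of the constructors of $\Lambda_{dB}$.
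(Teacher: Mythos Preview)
Your proof is correct and matches the paper's approach. The paper's own proof is only two sentences---it says one direction is ``straightforward from the $\beta$-nf definition'' and the other is ``by induction on the structure of $N\in\Lambda_{dB}$''---and your write-up is simply a careful unpacking of exactly that: a case check that the three listed shapes contain no redex, and a structural induction showing that every $\beta$-nf must fall into one of them (with the key observation that in the application case the head cannot be an abstraction).
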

\begin{proof}
  {\it Necessity} proof is straightforward from $\beta$-nf definition. {\it Sufficiency} proof is by induction on the structure  of $N\!\in\!\Lambda_{dB}$.
\end{proof}

\section{The type system and properties}\label{typesystems}

\begin{definition}\label{def:types}
\begin{enumerate}
\item Let $\mathcal{A}$ be a denumerably infinite {\bf set of type variables} and let $\alpha,\beta$ range over $\mathcal{A}$.

\item The set $\mathcal{T}$ of {\bf restricted intersection types} is defined by:

\begin{tabular}{c}
$\tau,\sigma\,\in\,\mathcal{T} ::= \mathcal{A}\,|\,\ty{\mathcal{U}}{\mathcal{T}}  \hspace{10mm}  
u \,\in\, \mathcal{U} ::= \omega\,|\,\ity{\mathcal{U}}{\mathcal{U}}\,|\,\mathcal{T}$
\end{tabular}

Types are quotiented by taking $\land$ to be commutative,
associative and to have $\omega$ as the neutral element.

\item {\bf Contexts} are ordered lists of $u \in \mathcal{U}$, defined by:
$\Gamma::= nil\,|\,u.\Gamma$

$\Gamma_i$ denotes the $i$-th element of $\Gamma$ and $|\Gamma|$ denotes the length of $\Gamma$. 

$\omega^{\dbi{n}}$ denotes  the sequence $\omega.\omega.\cdots.\omega$ of length $n$ and let $\omega^{\dbi{0}}\,.\Gamma = \Gamma$.

The extension of $\land$ to contexts is done by taking $nil$ as the neutral element and $\ity{(u_1.\Gamma)}{(u_2.\Delta)}=(\ity{u_1}{u_2}).(\ity{\Gamma}{\Delta})$.
Hence, $\land$ is commutative and associative on contexts.

\item {\bf Type substitution} maps type variables to types. Given a type substitution $s\!:\!\mathcal{A} \to \mathcal{T}$,  the corresponding extensions for elements in $\mathcal{U}$ and for contexts are straightforward. The domain of a substitution $s$ is defined by $Dom(s) \!=\! \{ \alpha \,|\, s(\alpha)\neq\alpha\}$ and let $[\alpha/\sigma]$ denote the substitution $s$ such that $Dom(s)\!=\!\{\alpha\}$. For two substitutions $s_1$ and $s_2$ with disjoint domains, let $s_1 + s_2$ be defined by\vspace{-2.5mm}
\[(s_1 + s_2)(\alpha)\left\{ \begin{array}{ll}
                             s_i(\alpha) & \mbox{if}\;\alpha \in Dom(s_i), \mbox{for}\; i\in\{1,2\} \\
                             \alpha      & \mbox{if}\;\alpha \notin Dom(s_1)\cup Dom(s_2)
                             \end{array}\right.
\]

\item $TV(u)$ is the {\bf set of type variables occurring} in $u\in \mathcal{U}$. Extension to contexts is straightforward.

\end{enumerate} 
\end{definition}

The set $\cal{T}$ defined here is equivalent to the one defined in \cite{SM96a}.

\begin{lemma}\label{lem:typeshape}
\begin{enumerate}
\item \label{lem:Ushape}
If $u\!\in\!\mathcal{U}$, then $u\!=\!\omega$ or $u\!=\!\land_{i=1}^{n}\tau_i$
  where $n\!>\!0$ and $\forall\,1\!\leq\! i\!\leq\! n$, $\tau_i\!\in\! \mathcal{T}$.

\item \label{lem:Tshape}
If $\tau\!\in\!\mathcal{T}$, then  $\tau\!=\!\alpha$, $\tau=\ty{\omega}{\sigma}$ or $\tau=\ty{\land_{i=1}^{n}\tau_i}{\sigma}$, where $n>0$ and $\sigma,\tau_1,\dots,\tau_n \in \mathcal{T}$.
\end{enumerate}
\end{lemma}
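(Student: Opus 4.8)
The plan is to prove item~\ref{lem:Ushape} by induction on the structure of $u \in \mathcal{U}$, and then obtain item~\ref{lem:Tshape} directly from item~\ref{lem:Ushape} together with the grammar for $\mathcal{T}$; no genuine mutual induction is needed.

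For item~\ref{lem:Ushape} I would split on the three productions of $\mathcal{U}$. If $u = \omega$ the first disjunct holds. If $u = \tau$ for some $\tau \in \mathcal{T}$, then $u = \land_{i=1}^{1}\tau_i$ with $n = 1 > 0$ and $\tau_1 = \tau \in \mathcal{T}$, giving the second disjunct. If $u = \ity{u_1}{u_2}$ with $u_1, u_2 \in \mathcal{U}$, the induction hypothesis makes each $u_j$ either $\omega$ or a nonempty intersection $\land_{i=1}^{n_j}\tau^{(j)}_i$ of types; since, on the quotient, $\omega$ is neutral and $\land$ is associative and commutative, $\ity{u_1}{u_2}$ is $\omega$ when both $u_j$ are $\omega$ and otherwise equals the concatenation of the nonempty lists of $\tau$'s arising from the $u_j$ that are not $\omega$ --- again a nonempty intersection of elements of $\mathcal{T}$, as required.

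For item~\ref{lem:Tshape}, given $\tau \in \mathcal{T}$ the grammar leaves only two cases: $\tau = \alpha \in \mathcal{A}$, which is the first disjunct; or $\tau = \ty{u}{\sigma}$ with $u \in \mathcal{U}$ and $\sigma \in \mathcal{T}$, in which case applying item~\ref{lem:Ushape} to $u$ yields either $u = \omega$, hence $\tau = \ty{\omega}{\sigma}$, or $u = \land_{i=1}^{n}\tau_i$ with $n > 0$ and all $\tau_i \in \mathcal{T}$, hence $\tau = \ty{\land_{i=1}^{n}\tau_i}{\sigma}$. In every case $\sigma$ and the $\tau_i$ lie in $\mathcal{T}$.

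I do not expect a real obstacle here: the one point requiring care is that the lemma is a statement about $\land$-equivalence classes, so the raw parse tree of a given $u$ may still contain nested intersections and spurious occurrences of $\omega$, and the inductive step must invoke precisely the commutativity, associativity and neutrality equations to flatten such a term into the canonical form (a nonempty intersection of types, or $\omega$). This is pure bookkeeping rather than a mathematical difficulty.
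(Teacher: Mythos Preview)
Your proposal is correct and matches the paper's approach: item~\ref{lem:Ushape} by structural induction on $u\in\mathcal{U}$, and item~\ref{lem:Tshape} by inspecting the grammar of $\mathcal{T}$ together with item~\ref{lem:Ushape}. The only cosmetic difference is that the paper phrases the argument for item~\ref{lem:Tshape} as ``induction on $\tau\in\mathcal{T}$'', whereas you (rightly) note that a single case split on the two productions of $\mathcal{T}$ suffices and no recursive call is actually needed.
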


\begin{proof}
\begin{enumerate}
\item By induction on $u\!\in\!\mathcal{U}$.

\item By induction on $\tau\!\in\!\mathcal{T}$ and Lemma \ref{lem:typeshape}.\ref{lem:Ushape}.\qedhere
\end{enumerate}
\end{proof} 

\begin{definition}\label{def:trules}
\begin{enumerate}

\item The typing rules for system $SM$ are given as follows:

 \begin{table}[h]
 \begin{center}

 {\small
 
 \begin{tabular}[h]{c}

 \begin{tabular}{c@{\hspace{1.3cm}}c}

$\displaystyle\frac{}{\tyj{\dbi{1}}{\tau.nil}{\tau}}\; \mathrm{var}$
& 
$ \displaystyle\frac{\tyj{M}{u.\Gamma}{\tau}}{\tyj{\lambda.M}{\Gamma}{\ty{u}{\tau}}}\;
\to_{i} $\\[3mm]

$\displaystyle\frac{\tyj{\dbi{n}}{\Gamma}{\tau}}{\tyj{\dbi{n{+}1}}{\omega.\Gamma}{\tau}}\;
\mathrm{varn}$ 
&
$\displaystyle\frac{\tyj{M}{nil}{\tau}}{\tyj{\lambda.M}{nil}{\ty{\omega}{\tau}}}\;
\to'_{i}$\\[3mm]
 
\end{tabular} \\

 $\displaystyle\frac{\tyj{M_1}{\Gamma}{\ty{\omega}{\tau}} \qquad
  \tyj{M_2}{\Delta}{\sigma}}{\tyj{\app{M_1}{M_2}}{\ity{\Gamma}{\Delta}}{\tau}}\;\to'_{e}$\\[3mm]

 $\displaystyle\frac{\tyj{M_1}{\Gamma}{\ty{\land_{i=1}^{n}\sigma_i}{\tau}} \qquad
  \tyj{M_2}{\Delta^1}{\sigma_1}\,\dots\,\tyj{M_2}{\Delta^n}{\sigma_n}}{\tyj{\app{M_1}{M_2}}{\ity{\Gamma}{\ity{\Delta^1 \land \cdots}{\Delta^n}}}{\tau}}\;\to_{e}$

\end{tabular}

}
\end{center}
\end{table}\vspace{-4mm}

\item System $SM_r$ is obtained from system $SM$, replacing rule $\mathrm{var}$ by rule
\[ \displaystyle\frac{}{\tyj{\dbi{1}}{\ty{\sigma_1\to\cdots\to\sigma_n}{\alpha}.nil}{\ty{\sigma_1\to\cdots\to\sigma_n}{\alpha}}}\,(n \geq 0)\quad\mathrm{var}_r\]  

\end{enumerate} 

\end{definition}

Type judgements will be of the form $\tyjs{M}{\Gamma}{S}{\tau}$, meaning that term $M$ has type $\tau$ in system $S$ provided $\Gamma$ for $FI(M)$ . Briefly, $M$ has type $\tau$ with $\Gamma$ in $S$ or $(\Gamma,\tau)$ is a typing of $M$ in $S$. The $S$ is omitted whenever its is clear to which system we are referring to.

Note that $SM$ is a proper extension of $SM_r$, hence properties stated for the system $SM$ are also true for the system $SM_r$. The following lemma states that $SM$ is relevant in the sense of \cite{DG94}. 
\begin{lemma}\label{lem:noweak}
If $\tyjs{M}{\Gamma}{SM}{\tau}$, then $|\Gamma|\!=\!sup(M)$ and $\forall 1 \!\leq \!i \!\leq \!|\Gamma|$, $\Gamma_i \neq \omega \:\mbox{iff}\:\dbi{i}\!\in\!FI(M)$.
\end{lemma}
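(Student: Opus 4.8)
The plan is to prove this by structural induction on the derivation of $\tyjs{M}{\Gamma}{SM}{\tau}$, proceeding by case analysis on the last typing rule applied. Since $SM$ has five rules ($\mathrm{var}$, $\mathrm{varn}$, $\to_i$, $\to'_i$, $\to'_e$, $\to_e$), there are six base/inductive cases to treat, and in each I need to verify two things simultaneously: the length claim $|\Gamma| = sup(M)$ and the occurrence claim that $\Gamma_i \neq \omega$ exactly when $\dbi{i} \in FI(M)$.

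First I would handle the axioms. For $\mathrm{var}$, $M \equiv \dbi{1}$, $\Gamma = \tau.nil$, so $|\Gamma| = 1 = sup(\dbi{1})$, $FI(\dbi{1}) = \{\dbi{1}\}$, and $\Gamma_1 = \tau \neq \omega$ — here I should note that $\omega \notin \mathcal{T}$ (by the grammar in Definition~\ref{def:types}), so any $\tau \in \mathcal{T}$ is genuinely distinct from $\omega$; this observation is used repeatedly. Then for $\mathrm{varn}$, $M \equiv \dbi{n{+}1}$ is typed from $\tyj{\dbi{n}}{\Gamma'}{\tau}$ with $\Gamma = \omega.\Gamma'$; by IH $|\Gamma'| = sup(\dbi{n}) = n$ so $|\Gamma| = n+1 = sup(\dbi{n{+}1})$, the new head entry $\Gamma_1 = \omega$ matches $\dbi{1} \notin FI(\dbi{n{+}1})$, and for $i \geq 2$ the claim transfers from the IH since $\dbi{i} \in FI(\dbi{n{+}1})$ iff $\dbi{i{-}1} \in FI(\dbi{n})$.

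For the abstraction rules $\to_i$ and $\to'_i$, $M \equiv \lambda.M'$. In $\to_i$ we have $\tyj{M'}{u.\Gamma'}{\tau'}$ with $\Gamma = \Gamma'$; by IH $|u.\Gamma'| = sup(M')$, so $sup(M') \geq 1$ and hence by Lemma~\ref{lem:supxstruct}.\ref{lem:supxabs}, $sup(\lambda.M') = sup(M') - 1 = |\Gamma'| = |\Gamma|$. The occurrence claim: $\dbi{i} \in FI(\lambda.M')$ iff $\dbi{i{+}1} \in FI(M')$ (Definition~\ref{def:fi}), which by IH holds iff $(u.\Gamma')_{i+1} = \Gamma'_i = \Gamma_i \neq \omega$. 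For $\to'_i$, $\tyj{M'}{nil}{\tau'}$ gives by IH $sup(M') = 0$, so $sup(\lambda.M') = 0 = |nil| = |\Gamma|$, and the occurrence claim is vacuous on both sides. The application rules are where the real bookkeeping lives: for $\to_e$ (and the simpler $\to'_e$ as its degenerate $n=0$-flavoured analogue), $M \equiv \app{M_1}{M_2}$ with context $\Gamma = \Gamma^{(1)} \land \Gamma^{(2)} \land \cdots$. I would first establish that $\land$ on contexts preserves the needed properties: $|\Gamma \land \Delta| = \max(|\Gamma|,|\Delta|)$ (from the definition extending shorter contexts implicitly, or more precisely one argues the derivation forces the lengths to agree appropriately — this needs care), and $(\Gamma \land \Delta)_i \neq \omega$ iff $\Gamma_i \neq \omega$ or $\Delta_i \neq \omega$, using that $\omega \land \omega = \omega$ and that $\land$ of anything non-$\omega$ stays non-$\omega$ (again because the $\tau_i$ components are in $\mathcal{T}$, never $\omega$). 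Combining with the IHs on $M_1$ and all copies of $M_2$, and with Lemma~\ref{lem:supxstruct}.\ref{lem:supxapp} which says $sup(\app{M_1}{M_2}) = \max(sup(M_1), sup(M_2))$, and $FI(\app{M_1}{M_2}) = FI(M_1) \cup FI(M_2)$, both claims follow.

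The main obstacle I anticipate is the application case, specifically making precise the arithmetic of $|\Gamma \land \Delta|$ and the $\omega$-entry tracking through the iterated intersection $\ity{\Delta^1 \land \cdots}{\Delta^n}$: one must be careful that the contexts being intersected may have different lengths (since $M_2$ may appear with different typings), that the convention $\omega^{\dbi{0}}.\Gamma = \Gamma$ together with padding behaves correctly, and that no entry silently becomes $\omega$ or silently becomes non-$\omega$ in a way that breaks the correspondence with $FI$. A clean way to manage this is to first prove, as an auxiliary sublemma, that if $\tyjs{M_2}{\Delta^j}{SM}{\sigma_j}$ for each $j$ then $\ity{\Delta^1}{\cdots \land \Delta^n}$ again satisfies $|{\cdot}| = sup(M_2)$ and the $\omega$-correspondence for $FI(M_2)$ — which is immediate from the IH applied $n$ times plus the two closure properties of $\land$ noted above — and then the outer $\ity{\Gamma}{(\cdots)}$ step is just one more application of the same closure facts. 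With that structuring the proof is routine; I would present it compactly, spelling out only the $\mathrm{varn}$ and $\to_e$ cases in detail and dispatching the rest as analogous.
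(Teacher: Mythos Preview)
Your proposal is correct and follows essentially the same approach as the paper: induction on the typing derivation with a case for each rule, invoking Lemma~\ref{lem:supxstruct} for the length claim and the definition of $FI$ for the occurrence claim. One small simplification you can make in the $\to_e$ case: your worry that the $\Delta^j$ may have different lengths is unfounded, since the IH applied to each premise $\tyj{M_2}{\Delta^j}{\sigma_j}$ already forces $|\Delta^j| = sup(M_2)$ for all $j$, so only the outer $\Gamma \land (\Delta^1 \land \cdots \land \Delta^n)$ involves a genuine length mismatch --- exactly as the paper handles it.
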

\begin{proof}
By induction on the derivation $\tyj{M}{\Gamma}{u}$.
\begin{itemize}
\item If $\displaystyle\frac{}{\tyj{\dbi{1}}{\tau.nil}{\tau}}$, then $|\Gamma|\!=\! 1 \!=\! sup(\dbi{1}\,)$. Note that $FI(\dbi{1}\,)\!=\!\{\dbi{1}\,\}$ and $\Gamma_1 \!=\! \tau$.

\item If
  $\displaystyle\frac{\tyj{\dbi{n}}{\Gamma}{\tau}}{\tyj{\dbi{n+1}}{\omega.\Gamma}{\tau}}$, then by IH one has $|\Gamma|=sup(\dbi{n}\,)=n$, $\Gamma_n \neq \omega$ and $\forall 1 \leq i < n$, $\Gamma_i = \omega$. Thus, $|\omega.\Gamma| = 1 + |\Gamma| = n\!+\!1 = sup(\dbi{n\!+\!1}\,)$, $(\omega.\Gamma)_{n+1} = \Gamma_{n} \neq \omega$, $(\omega.\Gamma)_1 = \omega$ and $\forall 1\leq i < n$, $(\omega.\Gamma)_{i\!+\!1} = \Gamma_{i} = \omega$.

\item Let $\displaystyle\frac{\tyj{M}{u.\Gamma}{\sigma}}{\tyj{\lambda.M}{\Gamma}{\ty{u}{\sigma}}}$. By IH, $|u.\Gamma| = sup(M)$ and $\forall 0 \leq i \leq sup(M)\!-\!1 $, $(u.\Gamma)_{i\!+\!1} \neq \omega$ iff $\dbi{i\!+\!1}\!\in\!FI(M)$. Hence, $sup(M)= 1\!+\!|\Gamma| > 0$ and, by Lemma \ref{lem:supxstruct}.\ref{lem:supxabs}, $sup(\lambda.M) = sup(M)\!-\!1 = |\Gamma|$. By Definition \ref{def:fi}, $\forall 1\leq i \leq sup(\lambda.M)$, $\dbi{i}\!\in\!FI(\lambda.M)$ iff $\dbi{i\!+\!1}\!\in\!FI(M)$, thus, $(u.\Gamma)_{i\!+\!1} = \Gamma_i \neq \omega$ iff $\dbi{i}\!\in\!FI(\lambda.M)$.

\item Let $\displaystyle\frac{\tyj{M}{nil}{\sigma}}{\tyj{\lambda.M}{nil}{\ty{\omega}{\sigma}}}$.  By IH one has $|nil|\!=\!sup(M)\!=\!0$. Thus, by Lemma \ref{lem:supxstruct}.\ref{lem:supxabs}, $sup(\lambda.M)\!=\!sup(M)\!=\!|nil|$. Note that $FI(M)\!=\!FI(\lambda.M)\!=\!\emptyset$.

\item Let $\displaystyle\frac{\tyj{M_1}{\Gamma}{\ty{\omega}{\tau}} \qquad
  \tyj{M_2}{\Delta}{\sigma}}{\tyj{\app{M_1}{M_2}}{\ity{\Gamma}{\Delta}}{\tau}}$.  By IH, $|\Gamma| {=} sup(M_1)$, $\forall 1\!\leq\! i \!\leq\! |\Gamma|$ one has $\Gamma_i \neq \omega\,\mbox{iff}\, \dbi{i}\!\in\!FI(M_1)$, $|\Delta| = sup(M_2)$ and $\forall 1\leq j\leq |\Delta|$ one has $\Delta_j \neq \omega \,\mbox{iff}\, \dbi{j}\!\in\!FI(M_2)$. By Lemma \ref{lem:supxstruct}.\ref{lem:supxapp} one has $sup(\app{M_1}{M_2}) = max(sup(M_1),sup(M_2))\!= \!max(|\Gamma|,|\Delta|) = |\ity{\Gamma}{\Delta}|$. Let $1\!\leq\!l\!\leq\!|\ity{\Gamma}{\Delta}|$ and suppose w.l.o.g. that $l \leq |\Gamma|,|\Delta|$. Thus, $(\ity{\Gamma}{\Delta})_l = \ity{\Gamma_l}{\Delta_l} \!\neq\! \omega \,\mbox{\it iff}\, \Gamma_l \!\neq\! \omega$ or $\Delta_l \!\neq\! \omega  \,\mbox{\it iff}\, \dbi{l} \!\in\! FI(M_1)$ or $\dbi{l} \!\in\! FI(M_2) \,\mbox{\it iff}\, \dbi{l} \!\in\! FI(M_1)\cup FI(M_2) \!=\! FI(\app{M_1}{M_2})$.

\item Let $\displaystyle\frac{\tyj{M_1}{\Gamma}{\ty{\land_{k=1}^{n}\sigma_k}{\tau}} \qquad
  \tyj{M_2}{\Delta^1}{\sigma_1}\,\dots\,\tyj{M_2}{\Delta^n}{\sigma_n}}{\tyj{\app{M_1}{M_2}}{\ity{\Gamma}{\ity{\Delta^1\!\land \cdots}{\Delta^n}}}{\tau}}$. By IH, $|\Gamma| = sup(M_1)$, $\forall 1\!\leq \!i\!\leq\! |\Gamma|$ one has $\Gamma_i \neq \omega\,\mbox{iff}\, \dbi{i}\!\in\!FI(M_1)$ and $\forall 1 \!\leq \!k\! \leq\! n$, $|\Delta^k| = sup(M_2)$ and $\forall 1\!\leq\! j\!\leq\! |\Delta^k|$ one has $\Delta_j^k \neq \omega \,\mbox{iff}\, \dbi{j}\!\in\!FI(M_2)$. Let $\Delta' = \ity{\Delta^1\!\land \cdots\,}{\Delta^n}$. Thus, $|\Delta'| = sup(M_2)$ and $\forall 1\!\leq\! j \!\leq \! |\Delta'|$, $\Delta'_j \neq \omega \,\mbox{\it iff}\, \dbi{j} \in FI(M_2)$. The proof is analogous to the one above.\qedhere
\end{itemize}
\end{proof}
Note that, by Lemma \ref{lem:noweak} above, system $SM$ is not only relevant but there is a strict relation between the free indices of terms and the length of contexts in their typings. Following, a generation lemma is presented for typings in $SM$ and some specific for $SM_r$
\begin{lemma}[Generation]\label{lem:gen}
\begin{enumerate}
\item \label{lem:genvar}
If $\tyjs{\dbi{n}}{\Gamma}{SM}{\tau}$, then $\Gamma_n\! =\!\tau$.

\item \label{lem:genvarr}
If $\tyjs{\dbi{n}}{\Gamma}{SM_r}{\tau}$, then $\tau = \ty{\sigma_{1}\to\cdots\to \sigma_{k}}{\alpha}$ for $k\geq 0$.

\item \label{lem:genabsone}
If $\tyjs{\lambda.M}{nil}{SM}{\tau}$, then either $\tau\!=\!\ty{\omega}{\sigma}$  and $\tyj{M}{nil}{\sigma}$ or $\tau\!=\!\ty{\land_{i=1}^{n}\sigma_i}{\sigma}$, $n>0$, and $\tyjs{M}{\land_{i=1}^{n}\sigma_i.nil}{SM}{\sigma}$ for some $\sigma,\sigma_1,\dots,\sigma_n\!\in\!\mathcal{T}$.

\item \label{lem:genabstwo} 
If $\tyjs{\lambda.M}{\Gamma}{SM}{\tau}$ and $|\Gamma|>0$, then $\tau\!=\!\ty{u}{\sigma}$ for some $u\!\in\!\mathcal{U}$ and $\sigma\!\in\!\mathcal{T}$, where $\tyjs{M}{u.\Gamma}{SM}{\sigma}$.

\item \label{lem:genapp}
If $\tyjs{\dbi{n}\;M_1\cdots M_m}{\Gamma}{SM_r}{\tau}$, then $\Gamma = \ity{(\omega^{\dbi{n\!-\!1}}\,.\ty{\sigma_1\to\cdots\to \sigma_{m}}{\tau}.nil)}{\Gamma^1\!\land\cdots\land \Gamma^m}$, $\forall 1\!\leq\!i\!\leq\!m$, $\tyjs{M_i}{\Gamma^i}{SM_r}{\sigma_i}$ and $\tau = \ty{\sigma_{m\!+\!1}\to\cdots\to \sigma_{m\!+\!k}}{\alpha}$.

\end{enumerate}
\end{lemma}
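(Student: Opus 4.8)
The plan is to dispatch the five items by inspecting, in each case, which typing rules can have the stated judgement as their conclusion, together with an induction where one is needed. For items \ref{lem:genvar} and \ref{lem:genvarr} I would induct on $n$. The only rule concluding a judgement for $\dbi{1}$ is $\mathrm{var}$ in $SM$ (which forces $\Gamma=\tau.nil$, hence $\Gamma_1=\tau$) and $\mathrm{var}_r$ in $SM_r$ (which produces directly a type of the form $\ty{\sigma_1\to\cdots\to\sigma_k}{\alpha}$ with $k\geq 0$); the only rule concluding a judgement for $\dbi{n{+}1}$ is $\mathrm{varn}$, whose premise $\tyj{\dbi{n}}{\Gamma'}{\tau}$ with $\Gamma=\omega.\Gamma'$ lets the induction hypothesis close the step, via $\Gamma_{n+1}=\Gamma'_n$ for item \ref{lem:genvar} and via the preserved shape of $\tau$ for item \ref{lem:genvarr}.

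For items \ref{lem:genabstwo} and \ref{lem:genabsone} I would argue directly from the only two rules that can conclude a judgement for $\lambda.M$: rule $\to_i$ gives $\tau=\ty{u}{\sigma}$ from $\tyj{M}{u.\Gamma}{\sigma}$, and rule $\to'_i$ forces the context to be $nil$ and gives $\tau=\ty{\omega}{\sigma}$ from $\tyj{M}{nil}{\sigma}$. When $|\Gamma|>0$ (item \ref{lem:genabstwo}) only $\to_i$ applies, which is exactly the claim. When the context is $nil$ (item \ref{lem:genabsone}), the $\to'_i$ case yields the first alternative, while in the $\to_i$ case we obtain $\Gamma=nil$ and $\tyj{M}{u.nil}{\sigma}$; by Lemma~\ref{lem:typeshape}.\ref{lem:Ushape} either $u=\omega$ or $u=\land_{i=1}^{n}\sigma_i$ with $n>0$, and $u=\omega$ is impossible because $\tyj{M}{\omega.nil}{\sigma}$ would force $sup(M)=1$ and hence $(\omega.nil)_1\neq\omega$ by Lemma~\ref{lem:noweak}, which is absurd, so the second alternative holds.

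Item \ref{lem:genapp} is the substantial one; I would prove it by induction on the number $m$ of arguments, working inside $SM_r$ and using that every $SM_r$-derivation is also an $SM$-derivation, so that Lemma~\ref{lem:noweak} and item \ref{lem:genvar} are available. For $m=0$ the term is $\dbi{n}$: item \ref{lem:genvarr} gives $\tau=\ty{\sigma_1\to\cdots\to\sigma_k}{\alpha}$, Lemma~\ref{lem:noweak} gives $|\Gamma|=sup(\dbi{n})=n$ with the $n$-th entry the only non-$\omega$ one, and item \ref{lem:genvar} identifies that entry as $\tau$, so $\Gamma=\omega^{\dbi{n-1}}.\tau.nil$, i.e. the claimed shape for the empty argument list. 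For the step, $\dbi{n}\,M_1\cdots M_{m+1}$ is the application $(\dbi{n}\,M_1\cdots M_m)\,M_{m+1}$, so its last rule is $\to_e$ or $\to'_e$; in both, the left premise is $\tyj{\dbi{n}\,M_1\cdots M_m}{\Gamma'}{\rho}$, and the induction hypothesis gives $\Gamma'=\ity{(\omega^{\dbi{n-1}}.\ty{\sigma_1\to\cdots\to\sigma_m}{\rho}.nil)}{\Gamma^1\land\cdots\land\Gamma^m}$, $\tyjs{M_i}{\Gamma^i}{SM_r}{\sigma_i}$ for $1\leq i\leq m$, and $\rho=\ty{\sigma_{m+1}\to\cdots\to\sigma_{m+k}}{\alpha}$. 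Since $\rho$ is a suffix of a $\mathrm{var}_r$-shaped arrow chain, its leading domain $\sigma_{m+1}$ is a proper type, so neither $\to'_e$ nor an intersection-splitting instance of $\to_e$ can act on the head: the applicable rule is $\to_e$ with a single right premise $\tyj{M_{m+1}}{\Gamma^{m+1}}{\sigma_{m+1}}$ and result $\tau=\ty{\sigma_{m+2}\to\cdots\to\sigma_{m+k}}{\alpha}$, with $\Gamma=\ity{\Gamma'}{\Gamma^{m+1}}$. Folding the newly consumed arrow into the head's type and reassociating $\land$ on contexts then gives $\Gamma=\ity{(\omega^{\dbi{n-1}}.\ty{\sigma_1\to\cdots\to\sigma_{m+1}}{\tau}.nil)}{\Gamma^1\land\cdots\land\Gamma^{m+1}}$, which is the claim for $m+1$ arguments.

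The main obstacle is precisely the rule analysis in the step of item \ref{lem:genapp}: one must establish that the type $\rho$ carried by the head subterm $\dbi{n}\,M_1\cdots M_m$ is always an arrow whose domain is a proper type in $\mathcal{T}$ — otherwise the outer application would be untypeable, or $M_{m+1}$ would be typed by several premises and the neat single-block description of $\Gamma^{m+1}$ would break — and one must keep the two recursions synchronised, the one peeling a single $\to$ off the head's type per argument and the one adding a single context block per argument, so that the $\land$ of the blocks reconstructs $\Gamma$. Once that is settled, the reassembly is routine, relying only on commutativity and associativity of $\land$ on contexts and on the length and relevance data furnished by Lemma~\ref{lem:noweak}.
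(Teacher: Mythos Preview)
Your proposal is correct and follows essentially the same approach as the paper. Both proceed by induction on the derivation (your induction on $n$ for items \ref{lem:genvar}--\ref{lem:genvarr} is equivalent, since the derivation for $\dbi{n}$ is uniquely determined by $n$), by direct rule inspection for items \ref{lem:genabsone}--\ref{lem:genabstwo}, and by induction on $m$ for item \ref{lem:genapp}, using the induction hypothesis on the head to force the domain to be a single type in $\mathcal{T}$ and hence the last rule to be $\to_e$ with a single right premise. Your explicit exclusion of $u=\omega$ in item \ref{lem:genabsone} via Lemma~\ref{lem:noweak}, and of $\to'_e$ in item \ref{lem:genapp}, spell out details the paper leaves implicit in the phrases ``case analysis'' and ``the last step of the derivation is''.
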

\begin{proof}
\begin{enumerate}
\item By induction on the derivation $\tyjs{\dbi{n}}{\Gamma}{SM}{\tau}$. Note that $(\omega.\Gamma)_{n\!+\!1} = \Gamma_{n}$.

\item By induction on the derivation $\tyjs{\dbi{n}}{\Gamma}{SM_r}{\tau}$.

\item By case analysis on the derivation $\tyjs{\lambda.M}{nil}{SM}{\tau}$.

\item By case analysis on the derivation $\tyjs{\lambda.M}{\Gamma}{SM}{\tau}$, for $|\Gamma|>0$.

\item By induction on $m$.

If $m=0$, then, by Lemma \ref{lem:gen}.\ref{lem:genvarr}, $\tau = \ty{\sigma_{1}\to\cdots\to \sigma_{k}}{\alpha}$. Thus, by Lemmas \ref{lem:noweak} and \ref{lem:gen}.\ref{lem:genvar}, $\Gamma = \omega^{\dbi{n\!-\!1}}\,.\tau.nil$.

If $m=m'+1$, then by case analysis the last step of the derivation is
\[ \displaystyle\frac{\tyj{\dbi{n}\;M_1\cdots M_{m'}}{\Gamma}{\ty{\land_{j=1}^{l}\tau_j}{\tau}} \quad
  \tyj{M_{m'\!+\!1}}{\Delta^1}{\tau_1}\,\dots\,\tyj{M_{m'\!+\!1}}{\Delta^l}{\tau_l}}{\tyj{\app{\dbi{n}\;M_1\cdots M_{m'}}{M_{m'\!+\!1}}}{\ity{\Gamma}{\ity{\Delta^1\!\land \cdots}{\Delta^l}}}{\tau}}\]
By IH, $\Gamma \!=\! \ity{(\omega^{\dbi{n\!-\!1}}\,.\ty{\sigma_1\to\cdots\to \sigma_{m'}}{(\ty{\land_{j=1}^{l}\tau_j}{\tau})}.nil)}{\Gamma^1\!\land\cdots\land \Gamma^{m'}}$, $\forall 1\!\leq\!i\!\leq\!m'$, $\tyjs{M_i}{\Gamma^i}{SM_r}{\sigma_i}$ and $\ty{\land_{j=1}^{l}\tau_j}{\tau} \!=\! \ty{\sigma_{m'\!+\!1}\to\cdots\to \sigma_{m'\!+\!k}}{\alpha}$ . Therefore, $\tau = \ty{\sigma_{m'\!+\!2}\to\cdots\to \sigma_{m\!+\!k}}{\alpha}$, $l\!=\!1$ and $\tau_1\!=\! \sigma_{m'\!+\!1}$. Hence, taking $\Gamma^{m'\!+\!1} = \Delta^1$ and $\sigma_{m'\!+\!1} = \tau_1$, the result holds.\qedhere 

\end{enumerate}
\end{proof}
Following, we will give counterexamples to show that neither subject expansion nor reduction holds. 
\begin{example}
In order to have the subject expansion property, we need to prove the statement: If $\tyj{\{\dbi{1}\,/N\}M}{\Gamma}{\tau}$ then $\tyj{(\app{\lambda.M}{N})}{\Gamma}{\tau}$. Let $M\equiv \lambda.\dbi{1}$ and $N\equiv \dbi{3}$, hence $\{\dbi{1}\,/\dbi{3}\,\}\lambda.\dbi{1} = \lambda.\dbi{1}$. We have that, by generation lemmas, $\tyj{\lambda.\dbi{1}}{nil}{\ty{\alpha}{\alpha}}$. Thus, $\tyj{\lambda.\lambda.\dbi{1}}{nil}{\ty{\omega}{\ty{\alpha}{\alpha}}}$ and $\tyj{\dbi{3}}{\omega.\omega.\beta.nil}{\beta}$, then $\tyj{\app{\lambda.\lambda.\dbi{1}}{\dbi{3}}}{\omega.\omega.\beta.nil}{\ty{\alpha}{\alpha}}$.
\end{example}
For subject reduction, we need the statement: If $\tyj{(\app{\lambda.M}{N})}{\Gamma}{\tau}$ then $\tyj{\{\dbi{1}\,/N\}M}{\Gamma}{\tau}$. Note that if we take $M$ and $N$ as in the example above, we have the same problem as before but in the other way round. In other words, we have a restriction on the original context after the $\beta$-reduction, since we loose the typing information regarding $N \equiv \dbi{3}$. 

One possible solution for those problems is to replace rule $\to_e'$ by
\begin{tabular}{c}
$\displaystyle\frac{\tyj{M}{\Gamma}{\ty{\omega}{\tau}}}{\tyj{\app{M}{N}}{\Gamma}{\tau}}$
\end{tabular}
  
This approach was originally presented in \cite{SM96b}, but a new notion replacing free index should be introduced since we would not have the typing information for all free indices occurring in a term. In \cite{SM96b}, and in \cite{SM97}, no notion is presented instead of the usual free variables, which is wrongly used to state things that are not actually true.  

The other way to achieve the desired properties is to think about the meaning of the properties itself. Since, by  Lemma \ref{lem:noweak}, the system is related to relevant logic (see \cite{DG94}), the notion of restriction of contexts is an interesting way to talk about subject reduction. This concept was presented in \cite{KaNo2007} for environments, where environments expansion was also introduced for the sake of subject expansion. Note that this approach is not sufficient to regain subject expansion for system $SM$, since in rule $\to'_e$ it is required that the term being applied is also typeable.

Even though, any $\beta$-nf is typeable with system $SM_r$. We introduce the type inference algorithm $\mathtt{Infer}$ for $\beta$-nfs, similarly to \cite{SM96a}. 
\begin{definition}[Type inference algorithm]

Let $N$ be a $\beta$-nf:
\newline

{\small
$\mathtt{Infer}(N) = $\\

     \begin{tabular}{cl}
      {\bf Case} & $N = \dbi{n}$ \\

                 & {\bf let} $\alpha$ be a fresh type variable \\

                 & {\bf return} $(\omega^{\dbi{n{-}1}}\,.\alpha.nil,\alpha)$       
     \end{tabular}

     \begin{tabular}{cl}
     {\bf Case} & $N = \lambda.N'$ \\

                & {\bf let} $(\Gamma',\sigma) = \mathtt{Infer}(N')$ \\ 

                & {\bf if} $(\Gamma' = u.\Gamma)$ {\bf then} \\ 

                & {\bf return} $(\Gamma,\ty{u}{\sigma})$ \\

                & {\bf else} \\

                & {\bf  return} $(nil,\ty{\omega}{\sigma})$
     \end{tabular}

     \begin{tabular}{cl}
     {\bf Case} & $N = \app{\dbi{n}\,N_1\cdots}{N_m}$ \\

                & {\bf let} $(\Gamma^1,\sigma_1) = \mathtt{Infer}(N_1)$ \\
                &           \hspace{1.7cm} $\vdots$ \\
                &           \hspace{.2cm} $(\Gamma^m,\sigma_m) = \mathtt{Infer}(N_m)$\\
                &           \hspace{.2cm} $\alpha$ be a fresh type variable\\
                & {\bf return} $(\ity{(\omega^{\dbi{n{-}1}}\,.\ty{\sigma_1\to\cdots\to\sigma_m}{\alpha}.nil)}{\Gamma^1\!\land \cdots \land \Gamma^m},\alpha)$
     \end{tabular}

}

\end{definition}

\noindent Similarly to \cite{SM96a}, the notion of {\it fresh type variables} is used to prove completeness. The freshness of a variable is to guarantee that each time some type variable is picked up from $\mathcal{A}$ it is a new one. Therefore, two non overlapped calls to $\mathtt{Infer}$  return pairs with disjoints sets of type variables. Below, a runnig example of how the algorithm is applied is presented.

\begin{example}
Let $N \equiv \dbi{2}\:\:(\lambda.\!\dbi{1}\,)\:\:\dbi{1}\:\:\,\lambda.\app{\dbi{1}}{\dbi{1}\,}$. For $\mathtt{Infer}(N)$, the term $N$ matches the third case, for $n=2$. The algorithm is then called recursively as follows
\begin{eqnarray*}
 (\Gamma^1,\sigma_1)&=& \mathtt{Infer}(\lambda.\!\dbi{1}\,)\\
 (\Gamma^2,\sigma_2)&=& \mathtt{Infer}(\dbi{1}\,)\\
 (\Gamma^3,\sigma_3)&=&  \mathtt{Infer}(\lambda.\!\app{\dbi{1}}{\dbi{1}\,})
\end{eqnarray*}
Below, we show how each call is treated by the algorithm.

The case $\mathtt{Infer}(\lambda.\!\dbi{1}\,)$ goes down recursively to obtain $\mathtt{Infer}(\dbi{1}\,)=(\alpha_1.nil,\alpha_1)$ and then one has that $\mathtt{Infer}(\lambda.\!\dbi{1}\,)= (nil,\ty{\alpha_1}{\alpha_1})$. 

The case $\mathtt{Infer}(\dbi{1}\,)$ returns
$(\alpha_2.nil,\alpha_2)$. Note that we have to take a different type variable
from the one used in the previous case.

The case $\mathtt{Infer}(\lambda.\!\app{\dbi{1}}{\dbi{1}\,})$ goes down recursively to return $\mathtt{Infer}(\dbi{1}\,)=(\alpha_3.nil,\alpha_3)$, for the subterm $\dbi{1}\,$ on the right. For a fresh type variable $\alpha_4$, one has that $\ity{\ty{\alpha_3}{\alpha_4}.nil}{\alpha_3.nil} = \ity{(\ty{\alpha_3}{\alpha_4})}{\alpha_3}.nil$. Hence, $\mathtt{Infer}\app{\dbi{1}}{\dbi{1}\,} = (\ity{(\ty{\alpha_3}{\alpha_4})}{\alpha_3}.nil,\alpha_4)$. Finally, $\mathtt{Infer}(\lambda.\!\app{\dbi{1}}{\dbi{1}\,}) = (nil, \ity{(\ty{\alpha_3}{\alpha_4})}{\alpha_3} \to \alpha_4 )$.

Now, let $\tau = \ty{(\ty{\alpha_1}{\alpha_1}) \to \alpha_2 \to (\ity{(\ty{\alpha_3}{\alpha_4})}{\alpha_3} \to \alpha_4)}{\alpha_5}$ for the fresh type variable $\alpha_5$. One has that $(\omega.\tau)\land nil \land (\alpha_2.nil) \land nil = \alpha_2.\tau.nil$. Therefore, $\mathtt{Infer}(N) = (\alpha_2.\tau.nil,\alpha_5)$.
\end{example}
 
\begin{theorem}[Soundness]\label{teo:sound}
If $N$ is a $\beta$-nf and $\mathtt{Infer}(N)=(\Gamma,\sigma)$, then $\tyjs{N}{\Gamma}{SM_r}{\sigma}$.
\end{theorem}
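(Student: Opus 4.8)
The plan is to prove the statement by structural induction on $N$, following the characterisation of $\beta$-nfs, whose three cases ($N\equiv\dbi{n}$; $N\equiv\lambda.N'$; $N\equiv\dbi{n}\,N_1\cdots N_m$) coincide exactly with the three branches of $\mathtt{Infer}$. Before the induction I would record a small auxiliary fact, itself proved by an easy induction on $n$: for every $\tau$ of the form $\ty{\sigma_1\to\cdots\to\sigma_k}{\alpha}$ with $k\geq 0$ one has $\tyjs{\dbi{n}}{\omega^{\dbi{n-1}}.\tau.nil}{SM_r}{\tau}$, the base case being an instance of $\mathrm{var}_r$ and the step being one application of $\mathrm{varn}$.

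The base case $N\equiv\dbi{n}$ is then immediate: $\mathtt{Infer}$ returns $(\omega^{\dbi{n-1}}.\alpha.nil,\alpha)$ for a fresh $\alpha$, and $\alpha$ matches the shape required by the auxiliary fact with $k=0$, giving $\tyjs{\dbi{n}}{\omega^{\dbi{n-1}}.\alpha.nil}{SM_r}{\alpha}$. For $N\equiv\lambda.N'$, let $(\Gamma',\sigma)=\mathtt{Infer}(N')$; by the induction hypothesis $\tyjs{N'}{\Gamma'}{SM_r}{\sigma}$. If $\Gamma'=u.\Gamma$ (so $u\in\mathcal{U}$), rule $\to_i$ yields $\tyjs{\lambda.N'}{\Gamma}{SM_r}{\ty{u}{\sigma}}$; if $\Gamma'=nil$, rule $\to'_i$ yields $\tyjs{\lambda.N'}{nil}{SM_r}{\ty{\omega}{\sigma}}$. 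In either case the judgement obtained is exactly the output of $\mathtt{Infer}(N)$.

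For the application case $N\equiv\dbi{n}\,N_1\cdots N_m$, the induction hypothesis gives $\tyjs{N_i}{\Gamma^i}{SM_r}{\sigma_i}$ with $(\Gamma^i,\sigma_i)=\mathtt{Infer}(N_i)$, where each $\sigma_i\in\mathcal{T}\subseteq\mathcal{U}$, so that $\ty{\sigma_{j}\to\cdots\to\sigma_m}{\alpha}\in\mathcal{T}$ for every $1\leq j\leq m+1$ (read as $\alpha$ when $j=m+1$) by Lemma \ref{lem:typeshape}. I would then establish, by an inner induction on $m'$ for $0\leq m'\leq m$, that $\tyjs{\dbi{n}\,N_1\cdots N_{m'}}{(\omega^{\dbi{n-1}}.\ty{\sigma_1\to\cdots\to\sigma_m}{\alpha}.nil)\land\Gamma^1\land\cdots\land\Gamma^{m'}}{SM_r}{\ty{\sigma_{m'+1}\to\cdots\to\sigma_m}{\alpha}}$. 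The base case $m'=0$ is the auxiliary fact applied to $\tau=\ty{\sigma_1\to\cdots\to\sigma_m}{\alpha}$; the step, for $0\leq m'<m$, observes that the current head type is $\ty{\sigma_{m'+1}}{(\sigma_{m'+2}\to\cdots\to\sigma_m\to\alpha)}$ with $\sigma_{m'+1}\neq\omega$, so rule $\to_e$ applies with the single premise $\tyjs{N_{m'+1}}{\Gamma^{m'+1}}{SM_r}{\sigma_{m'+1}}$, combining contexts by $\land$. Taking $m'=m$ and using commutativity and associativity of $\land$ on contexts to match the context shape returned by $\mathtt{Infer}$ closes the case.

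The only place requiring care is this application case: one must check at each elimination step that the head's type genuinely decomposes as $\ty{\sigma_{m'+1}}{(\cdots)}$ so that $\to_e$ (and never $\to'_e$) is the applicable rule, and that the accumulated contexts fuse under $\land$ into precisely the context produced by $\mathtt{Infer}$. All of this is routine bookkeeping once the inner induction statement is stated correctly, and no freshness hypothesis is needed for soundness — freshness only enters in the completeness direction.
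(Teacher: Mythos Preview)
Your proposal is correct and follows essentially the same approach as the paper's own proof: structural induction on $N$ with the three $\beta$-nf cases, using $\mathrm{var}_r$ followed by $n{-}1$ applications of $\mathrm{varn}$ for the index case, $\to_i$/$\to'_i$ for the abstraction case, and $m$ successive applications of $\to_e$ for the application case. You simply make explicit, via the auxiliary fact and the inner induction on $m'$, what the paper compresses into the phrases ``by rule $\mathrm{varn}$ applied $n{-}1$ times'' and ``by rule $\to_e$ $m$-times''; your observation that $\to'_e$ is never invoked (because each $\sigma_i\in\mathcal{T}$) and that freshness is irrelevant for soundness are accurate refinements that the paper leaves implicit.
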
 
\begin{proof}
By structural induction on $N$.

\begin{itemize}
\item If $N \equiv \dbi{n}$ then $\mathtt{Infer}(\dbi{n}\,) = (\omega^{\dbi{n{-}1}}\,.\alpha.nil, \alpha)$. By rule $\mathrm{var}_r$, $\tyj{\dbi{1}}{\alpha.nil}{\alpha}$ and, by rule $\mathrm{varn}$ applied $n{-}1$ times, $\tyj{\dbi{n}}{\omega^{\dbi{n\!-\!1}}\,.\alpha.nil}{\alpha}$. 

\item Let $N \equiv \lambda.N'$. If $(\Gamma',\sigma) \!=\! \mathtt{Infer}(N')$ then, by IH one has $\tyj{N'}{\Gamma'}{\sigma}$. Thus, if $\Gamma'\!=\! u.\Gamma$ then $ \mathtt{Infer}(\lambda.N') \!=\! (\Gamma,\ty{u}{\sigma})$ and, by rule $\to_{i}$, $\tyj{\lambda.N'}{\Gamma}{\ty{u}{\sigma}}$, otherwise one has $ \mathtt{Infer}(\lambda.N') \!=\! (nil,\ty{\omega}{\sigma})$ and, by rule $\to'_{i}$,  $\tyj{\lambda.N'}{nil}{\ty{\omega}{\sigma}}$.

\item Let $N \equiv \dbi{n}\,N_1\cdots N_m$. If $\forall 1{\leq}i{\leq}m$, $(\Gamma^i,\sigma_{i}) \!=\! \mathtt{Infer}(N_i)$ then, by IH, $\forall 1{\leq}i{\leq}m$, $\tyj{N_i}{\Gamma^i}{\sigma_i}$. Let $\Delta = \omega^{\dbi{n{-}1}}\,.\ty{\sigma_1\to\cdots\to\sigma_m}{\alpha}.nil$. Hence $\mathtt{Infer}(N)=(\ity{\Delta}{\Gamma^1\!\land \cdots \land \Gamma^m},\alpha)$ for some fresh type variable $\alpha$. By rule $\mathrm{var}_r$ and  by  rule $\mathrm{varn}$ $n{-}1$-times, $\tyj{\dbi{n}}{\Delta}{\ty{\sigma_1\to\cdots\to\sigma_m}{\alpha}}$ and, by rule $\to_{e}$ $m$-times, $\tyj{N}{\ity{\Delta}{\Gamma^1\!\land \cdots \land \Gamma^m}}{\alpha}$.\qedhere

\end{itemize}
\end{proof}
Note that, since the choice of the new type variables is not fixed, $\mathtt{Infer}$ is well defined up to the name of type variables.
\begin{coro}\label{coro:nftyping}
If $N$ is a $\beta$-nf then $N$ is typeable in system $SM_r$.
\end{coro}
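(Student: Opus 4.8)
The plan is to read off the corollary directly from the Soundness theorem (Theorem~\ref{teo:sound}) together with the observation that the algorithm $\mathtt{Infer}$ is total on $\beta$-normal forms. So the only real content is to check that $\mathtt{Infer}(N)$ is well-defined whenever $N$ is a $\beta$-nf; once we have a pair $(\Gamma,\sigma)=\mathtt{Infer}(N)$, Theorem~\ref{teo:sound} immediately gives $\tyjs{N}{\Gamma}{SM_r}{\sigma}$, which is exactly what ``$N$ is typeable in $SM_r$'' means.

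First I would invoke the characterisation lemma for $\beta$-nfs: every $\beta$-nf $N$ is either $\dbi{n}$, or $\lambda.N'$ with $N'$ a $\beta$-nf, or $\dbi{n}\,N_1\cdots N_m$ with each $N_j$ a $\beta$-nf. These are precisely the three clauses in the definition of $\mathtt{Infer}$, and in each clause the recursive calls are made only on the subterms $N'$ (resp.\ the $N_j$), which by that same lemma are again $\beta$-nfs. Hence, by structural induction on $N$, the computation $\mathtt{Infer}(N)$ terminates and returns a pair $(\Gamma,\sigma)\in(\text{contexts})\times\mathcal{T}$: in the base case $N=\dbi{n}$ it returns $(\omega^{\dbi{n-1}}.\alpha.nil,\alpha)$ for a fresh $\alpha$; in the abstraction case it patches the context returned by the IH; and in the application case it combines the contexts $\Gamma^1,\dots,\Gamma^m$ supplied by the IH via $\land$ with $\omega^{\dbi{n-1}}.(\ty{\sigma_1\to\cdots\to\sigma_m}{\alpha}).nil$.

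Having established that $\mathtt{Infer}(N)=(\Gamma,\sigma)$ is defined, I would apply Theorem~\ref{teo:sound} to obtain $\tyjs{N}{\Gamma}{SM_r}{\sigma}$, so $(\Gamma,\sigma)$ is a typing of $N$ in $SM_r$ and the corollary follows. I do not expect any genuine obstacle here; the only point needing a moment's care is the totality argument for $\mathtt{Infer}$, i.e.\ that the recursion descends strictly through the term structure and never leaves the class of $\beta$-nfs — and this is exactly guaranteed by the $\beta$-nf characterisation lemma. (As a side remark, the proof is uniform in the choice of fresh type variables, since $\mathtt{Infer}$ is well-defined up to renaming of type variables, so the statement does not depend on that choice.)
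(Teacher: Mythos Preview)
Your proposal is correct and is exactly the intended reading of the corollary: the paper states it immediately after Theorem~\ref{teo:sound} with no separate proof, so the content is precisely that $\mathtt{Infer}$ is total on $\beta$-nfs and Soundness then yields a typing in $SM_r$. Your explicit check of totality via the $\beta$-nf characterisation lemma just spells out what the paper leaves implicit.
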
 

\begin{theorem}[Completeness]\label{teo:complete}
If $\tyjs{N}{\Gamma}{SM_r}{\sigma}$, $N$ a $\beta$-nf, then for $(\Gamma',\sigma')=\mathtt{Infer}(N)$ exists a type substitution $s$ such that $s(\Gamma')=\Gamma$ and $s(\sigma') = \sigma$.
\end{theorem}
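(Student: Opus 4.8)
The plan is to prove the statement by structural induction on the $\beta$-nf $N$. In every case the strategy is the same: apply the matching clause of the Generation Lemma (Lemma~\ref{lem:gen}) to the hypothesis $\tyjs{N}{\Gamma}{SM_r}{\sigma}$, read $\mathtt{Infer}(N)$ off its definition, and then build the required substitution $s$ out of the substitutions delivered by the induction hypothesis (IH) on the immediate subterms. The freshness discipline on the type variables returned by $\mathtt{Infer}$ is what makes the construction go through: the recursive calls inside $\mathtt{Infer}(N)$ are non-overlapped, so their outputs use pairwise disjoint sets of type variables, and since any $s_i$ furnished by the IH may be assumed to satisfy $Dom(s_i)\subseteq TV(\Gamma^i)\cup TV(\sigma_i)$ (restricting it does not affect its action on $\Gamma^i$ and $\sigma_i$), the resulting substitutions have pairwise disjoint domains and can be merged with the $+$ operation of Definition~\ref{def:types}. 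For the base case $N\equiv\dbi n$ we have $\mathtt{Infer}(\dbi n)=(\omega^{\dbi{n-1}}.\alpha.nil,\alpha)$ with $\alpha$ fresh; Lemmas~\ref{lem:noweak} and~\ref{lem:gen}.\ref{lem:genvar} force $\Gamma=\omega^{\dbi{n-1}}.\sigma.nil$ (and Lemma~\ref{lem:gen}.\ref{lem:genvarr} fixes the shape of $\sigma$, though it is not needed), so $s=[\alpha/\sigma]$ works.

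For $N\equiv\lambda.N'$ let $(\Gamma',\sigma')=\mathtt{Infer}(N')$; by Theorem~\ref{teo:sound} and Lemma~\ref{lem:noweak} one has $|\Gamma'|=sup(N')$, and together with Lemma~\ref{lem:supxstruct}.\ref{lem:supxabs} this pins down both the branch taken by $\mathtt{Infer}$ on $\lambda.N'$ and the applicable clause of the Generation Lemma, forcing the two to be consistent. If $\Gamma'=u_0.\Gamma_0$ is non-empty, then $\mathtt{Infer}(\lambda.N')=(\Gamma_0,\ty{u_0}{\sigma'})$, and the hypothesis takes the form $\sigma=\ty{w}{\rho}$ with $\tyjs{N'}{w.\Gamma}{SM_r}{\rho}$ --- via Lemma~\ref{lem:gen}.\ref{lem:genabstwo} when $\Gamma\neq nil$ (with $w\in\mathcal{U}$ arbitrary), and via Lemma~\ref{lem:gen}.\ref{lem:genabsone} with $w=\land_{i=1}^{n}\sigma_i$ when $\Gamma=nil$ (the $\omega$-alternative there being excluded because $sup(N')=|\Gamma'|>0$). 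The IH on $N'$ gives $s$ with $s(u_0.\Gamma_0)=w.\Gamma$ and $s(\sigma')=\rho$, hence $s(u_0)=w$, $s(\Gamma_0)=\Gamma$, and $s(\ty{u_0}{\sigma'})=\sigma$, so the same $s$ works. If instead $\Gamma'=nil$, then $sup(N')=0$, so $sup(\lambda.N')=0$ and $\Gamma=nil$; Lemma~\ref{lem:gen}.\ref{lem:genabsone} then gives $\sigma=\ty{\omega}{\rho}$ with $\tyjs{N'}{nil}{SM_r}{\rho}$ (its intersection alternative being excluded since $sup(N')=0$), $\mathtt{Infer}(\lambda.N')=(nil,\ty{\omega}{\sigma'})$, and the $s$ from the IH again suffices.

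For $N\equiv\dbi n\,N_1\cdots N_m$, with $(\Gamma^i,\sigma_i)=\mathtt{Infer}(N_i)$, the algorithm returns $\mathtt{Infer}(N)=(\ity{(\omega^{\dbi{n-1}}.\ty{\sigma_1\to\cdots\to\sigma_m}{\alpha}.nil)}{\Gamma^1\land\cdots\land\Gamma^m},\alpha)$ with $\alpha$ fresh, while Lemma~\ref{lem:gen}.\ref{lem:genapp} decomposes $\Gamma=\ity{(\omega^{\dbi{n-1}}.\ty{\rho_1\to\cdots\to\rho_m}{\sigma}.nil)}{\Delta^1\land\cdots\land\Delta^m}$ with $\tyjs{N_i}{\Delta^i}{SM_r}{\rho_i}$, a shape matching that of $\mathtt{Infer}(N)$ position by position. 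The IH on each $N_i$ yields $s_i$ with $s_i(\Gamma^i)=\Delta^i$ and $s_i(\sigma_i)=\rho_i$; after the restriction noted above the $s_i$ have pairwise disjoint domains, each disjoint from $\{\alpha\}$, so $s=s_1+\cdots+s_m+[\alpha/\sigma]$ is a well-defined type substitution. A direct computation then gives $s(\alpha)=\sigma$, $s(\sigma_i)=\rho_i$ and $s(\Gamma^i)=\Delta^i$, whence $s$ sends the context and result type of $\mathtt{Infer}(N)$ to $\Gamma$ and $\sigma$, as required.

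I expect the main obstacle to be the abstraction case --- no single step is hard, but it is the place where the argument is most delicate: one must check that the branch taken by $\mathtt{Infer}(\lambda.N')$, which hinges on whether the context inferred for $N'$ is empty, is forced to be compatible with the clause of the Generation Lemma that governs the hypothesis (part~\ref{lem:genabstwo} versus part~\ref{lem:genabsone} of Lemma~\ref{lem:gen}, and within part~\ref{lem:genabsone} the $\omega$-alternative versus the proper-intersection alternative), and that this compatibility is forced by relevance (Lemma~\ref{lem:noweak}) together with Lemma~\ref{lem:supxstruct}.\ref{lem:supxabs}. By comparison the application case is essentially bookkeeping, its only subtle point being the appeal to the freshness convention that legitimises combining the $s_i$ via $+$.
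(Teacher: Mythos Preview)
Your proof is correct and follows essentially the same route as the paper: structural induction on the $\beta$-nf, the Generation Lemma to decompose the typing hypothesis, the IH on subterms, and freshness to justify combining the resulting substitutions via $+$. The only cosmetic difference is in the abstraction case, where you split first on whether the \emph{inferred} context $\Gamma'$ is empty (invoking Soundness and Lemma~\ref{lem:noweak} to link $|\Gamma'|$ to $sup(N')$), whereas the paper splits first on whether the \emph{given} context $\Gamma$ is empty and deduces the shape of $\Gamma'$ directly from the IH; both organisations are valid and yield the same case analysis.
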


\begin{proof}
By structural induction on $N$
\begin{itemize}
\item Let $N \equiv \dbi{n}$. If $\tyj{\dbi{n}}{\Gamma}{\sigma}$ then, by Lemmas \ref{lem:noweak} and \ref{lem:gen}.\ref{lem:genvar}, $\Gamma \!=\! \omega^{\dbi{n{-}1}}\,.\sigma.nil$. One has that $\mathtt{Infer}(\dbi{n}\,)= (\omega^{\dbi{n{-}1}}\,.\alpha.nil,\alpha)$, then take $s=[\alpha/\sigma]$.

\item Let $N \equiv \lambda.N'$ and suppose that $\tyj{\lambda.N'}{\Gamma}{\sigma}$.

If $\Gamma \!=\! nil$, then by Lemma \ref{lem:gen}.\ref{lem:genabsone} either $\sigma \!=\! \ty{\omega}{\tau}$ and $\tyj{N'}{nil}{\tau}$ or $\sigma \!=\! \ty{\land_{j=1}^{n}\sigma_j}{\tau}$ and $\tyj{N'}{\land_{j=1}^{n}\sigma_j.nil}{\tau}$. The former, by IH, $\mathtt{Infer}(N') \!=\! (\Gamma',\tau')$  and there exists $s$ s.t. $s(\tau') \!=\! \tau$ and $s(\Gamma') \!=\! nil$, thus $\Gamma'\!=\!nil$. Hence, $\mathtt{Infer}(\lambda.N') \!=\! (nil,\ty{\omega}{\tau'})$ and $s(\ty{\omega}{\tau'}) \!=\! \ty{s(\omega)}{s(\tau')} \!=\! \sigma$. The latter, by IH, $\mathtt{Infer}(N')\!=\!(\Gamma',\tau')$ and there exists $s$ s.t. $s(\tau') \!=\! \tau$ and $s(\Gamma' ) \!=\! \land_{j=1}^{n}\sigma_j.nil$. Then $\Gamma'\!=\!u.nil$ for $s(u)\!=\!\land_{j=1}^{n}\sigma_j$, hence $\mathtt{Infer}(\lambda.N') \!=\! (nil,\ty{u}{\tau'})$ and $s(\ty{u}{\tau'}) \!=\! \ty{s(u)}{s(\tau')} \!=\!\sigma$.

Otherwise, by Lemma \ref{lem:gen}.\ref{lem:genabstwo}, $\sigma\!=\! \ty{u}{\tau}$ and $\tyj{N'}{u.\Gamma}{\tau}$. The proof is analogous to the one above. 

\item Let $N \equiv \app{\dbi{n}\,N_1\cdots}{N_m}$. If $\tyj{\dbi{n}\,N_1\cdots N_m}{\Gamma}{\sigma}$ then, by Lemma \ref{lem:gen}.\ref{lem:genapp}, $\forall 1\!\leq\!i\!\leq\!m$, $\tyj{N_i}{\Gamma^i}{\sigma_i}$ s.t. $\Gamma \!=\! \ity{(\omega^{\dbi{n\!-\!1}}\,.\ty{\sigma_1\to\cdots\to \sigma_{m}}{\sigma}.nil)}{\Gamma^1\!\land\cdots\land \Gamma^m}$. By IH,  $\forall 1\!\leq\!i\!\leq\!m$, $\mathtt{Infer}(N_i) \!=\! (\Gamma^{i'},\sigma'_i)$ and there is a $s_i$ s.t. $s_i(\sigma'_i) \!=\! \sigma_i$ and $s_i(\Gamma^{i'})\!=\!\Gamma^{i}$. One has that $\mathtt{Infer}(N) \!=\! (\ity{(\omega^{\dbi{n{-}1}}\,.\ty{\sigma'_1\to\cdots\to\sigma'_m}{\alpha}.nil)}{\Gamma^{1'}\!\land \cdots \land \Gamma^{m'}},\alpha)$, for some fresh type variable $\alpha$. The domain of each $s_i$ is compounded by the type variables returned by each call of $\mathtt{Infer}$ for the corresponding $N_i$, consequently they are disjoint. Thus, for $s \!=\! [\alpha/\sigma] + s_1 + \cdots + s_m$ the result holds. \qedhere

\end{itemize} 
\end{proof}
Hence, the pair returned by $\mathtt{Infer}$ for some $\beta$-nf $N$ is a most general typing of $N$ is $SM_r$. Note that these typings are unique up to renaming of type variables.
\begin{coro}
If $N$ is a $\beta$-nf, then $(\Gamma,\sigma)=\mathtt{Infer}(N)$ is a principal typing of $N$ in $SM_{r}$.
\end{coro}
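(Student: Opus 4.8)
The plan is to derive the corollary directly from the system-independent definition of principal typings due to Wells, using the soundness and completeness theorems just established together with the relevance lemma. Recall that a typing $(\Gamma,\sigma)$ is \emph{principal} for $N$ in $SM_r$ when it is a valid typing of $N$ (so that all typings obtainable from it by the admissible operations are also valid) and, conversely, every valid typing $(\Gamma',\sigma')$ of $N$ is an instance of $(\Gamma,\sigma)$ under those operations. For $SM_r$ the only operation taking one typing to another is type substitution: Lemma \ref{lem:noweak} shows the system is relevant, so there is no weakening to consider, and the context length is rigidly tied to $sup(N)$, so the contexts cannot be padded. Hence it suffices to show that $(\Gamma,\sigma) = \mathtt{Infer}(N)$ is a typing of $N$ and that every typing of $N$ is a substitution instance of it.

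First I would invoke Theorem \ref{teo:sound} (Soundness): since $N$ is a $\beta$-nf, $\mathtt{Infer}(N)=(\Gamma,\sigma)$ and $\tyjs{N}{\Gamma}{SM_r}{\sigma}$, so $(\Gamma,\sigma)$ is indeed a valid typing of $N$ in $SM_r$. Next I would invoke Theorem \ref{teo:complete} (Completeness): for any typing $(\Gamma',\sigma')$ with $\tyjs{N}{\Gamma'}{SM_r}{\sigma'}$, applying completeness with the roles matched up gives a type substitution $s$ such that $s(\Gamma)=\Gamma'$ and $s(\sigma)=\sigma'$; that is, the arbitrary typing is obtained from $(\Gamma,\sigma)$ by a type substitution, which is exactly the required instance relation. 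Combining the two directions yields that $(\Gamma,\sigma)$ is principal. The uniqueness up to renaming of type variables noted in the remark follows because $\mathtt{Infer}$ is deterministic up to the choice of fresh variables, and any two principal typings must be mutual substitution instances of each other, forcing the substitutions to be variable renamings.

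The only delicate point, and the one I would state carefully, is matching the abstract notion of principal typing to the concrete operations available in $SM_r$: one must argue that substitution is the \emph{only} typing-to-typing operation here, so that being a substitution instance is the same as being an instance in Wells' sense. This is where Lemma \ref{lem:noweak} does the real work—relevance rules out weakening of contexts, and the identity $|\Gamma|=sup(N)$ prevents any reshuffling or extension of the context beyond what the term's free indices dictate—so no additional closure operation (expansion, lifting, context extension) can produce a typing that is not already a substitution instance of $\mathtt{Infer}(N)$. Once that observation is in place, the corollary is an immediate consequence of Theorems \ref{teo:sound} and \ref{teo:complete}, with no further computation needed.
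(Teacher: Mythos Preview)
Your proposal is correct and follows essentially the same route as the paper: the corollary is stated immediately after Theorems~\ref{teo:sound} and~\ref{teo:complete} with only the remark that $\mathtt{Infer}(N)$ is a most general typing, unique up to renaming, so the paper's intended argument is precisely soundness plus completeness. Your additional paragraph invoking Lemma~\ref{lem:noweak} to argue that substitution is the \emph{only} admissible operation is a nice clarification the paper leaves implicit (it simply adopts, following Sayag--Mauny, the convention that PT in this system is defined via type substitution alone), but it is not a different method---just a more explicit justification of the same two-line deduction.
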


\section{Characterisation of principal typings}\label{charactyping}

Following, we give some characterisation of principal typings for $\beta$-nfs, analogue to \cite{SM96a}. To begin with, we introduce proper subsets of $\mathcal{T}$ and $\mathcal{U}$ containing the pairs returned by $\mathtt{Infer}$.         
\begin{definition}
\begin{enumerate}
\item Let $\mathcal{T}_C$, $\mathcal{T}_{NF}$ and $\mathcal{U}_C$ be defined by:

\begin{tabular}{l}
$\rho \,\in\, \mathcal{T}_C ::= \mathcal{A}\,|\,\ty{\mathcal{T}_{NF}}{\mathcal{T}_C} \hspace{.35in}
\varphi \,\in\,\mathcal{T}_{NF} ::= \mathcal{A}\,|\,\ty{\mathcal{U}_{C}}{\mathcal{T}_{NF}}\hspace{.35in}
v \,\in\,\mathcal{U}_C ::= \omega\,|\,\ity{\mathcal{U}_{C}}{\mathcal{U}_{C}}\,|\,\mathcal{T}_C$
\end{tabular}

\item Let $\mathcal{C}$ be the set of contexts $\Gamma::= nil \,|\, v.\Gamma$ such that $v \in \mathcal{U}_C$. Observe that $\mathcal{C}$ is closed under $\land$.

\end{enumerate}
\end{definition}
\begin{lemma}\label{lem:infersubset}
If $\mathtt{Infer}(N)=(\Gamma,\sigma)$, $N$ a $\beta$-nf, then $(\Gamma,\sigma) \in \mathcal{C}\!\times\!\mathcal{T}_{NF}$.
\end{lemma}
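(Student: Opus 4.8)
The plan is to prove the statement by structural induction on the $\beta$-nf $N$, following exactly the case split in the definition of $\mathtt{Infer}$. The claim is that each of $\mathcal{T}_C$, $\mathcal{T}_{NF}$ and $\mathcal{U}_C$ is a syntactic restriction of $\mathcal{T}$, $\mathcal{T}$ and $\mathcal{U}$ respectively, and that $\mathtt{Infer}$ always lands in the pair $\mathcal{C}\times\mathcal{T}_{NF}$; the key observation driving the induction is that we actually need a slightly stronger statement that also controls the \emph{components} of the returned context, so I would first isolate the invariant that $\mathtt{Infer}(N)=(\Gamma,\sigma)$ implies $\sigma\in\mathcal{T}_{NF}$ and $\Gamma\in\mathcal{C}$ (equivalently, every entry $\Gamma_i\in\mathcal{U}_C$), and prove this by induction.

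First I would do the base case $N\equiv\dbi{n}$: here $\mathtt{Infer}$ returns $(\omega^{\dbi{n-1}}.\alpha.nil,\alpha)$, and $\alpha\in\mathcal{A}\subseteq\mathcal{T}_{NF}$, while each entry of the context is either $\omega\in\mathcal{U}_C$ or $\alpha\in\mathcal{A}\subseteq\mathcal{T}_C\subseteq\mathcal{U}_C$, so $\Gamma\in\mathcal{C}$. For the abstraction case $N\equiv\lambda.N'$, by IH $\mathtt{Infer}(N')=(\Gamma',\sigma)$ with $\sigma\in\mathcal{T}_{NF}$ and $\Gamma'\in\mathcal{C}$; if $\Gamma'=u.\Gamma$ then $u\in\mathcal{U}_C$ and $\Gamma\in\mathcal{C}$, and the returned type $\ty{u}{\sigma}$ is of the form $\ty{\mathcal{U}_C}{\mathcal{T}_{NF}}\subseteq\mathcal{T}_{NF}$; if $\Gamma'=nil$ we return $(nil,\ty{\omega}{\sigma})$ and again $\ty{\omega}{\sigma}\in\mathcal{T}_{NF}$ since $\omega\in\mathcal{U}_C$. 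For the application case $N\equiv\dbi{n}\,N_1\cdots N_m$, by IH each $\mathtt{Infer}(N_i)=(\Gamma^i,\sigma_i)$ has $\sigma_i\in\mathcal{T}_{NF}$ and $\Gamma^i\in\mathcal{C}$; the returned type is a fresh $\alpha\in\mathcal{A}\subseteq\mathcal{T}_{NF}$, and the returned context is the $\land$ of $\omega^{\dbi{n-1}}.(\ty{\sigma_1\to\cdots\to\sigma_m}{\alpha}).nil$ with the $\Gamma^i$; here the single non-$\omega$ entry $\ty{\sigma_1\to\cdots\to\sigma_m}{\alpha}$ must be checked to lie in $\mathcal{U}_C$, which follows because $\alpha\in\mathcal{T}_{NF}$, each $\sigma_i\in\mathcal{T}_{NF}$, so $\ty{\sigma_1\to\cdots\to\sigma_m}{\alpha}$ unfolds as $\ty{\mathcal{T}_{NF}}{(\ty{\mathcal{T}_{NF}}{\cdots})}\in\mathcal{T}_C\subseteq\mathcal{U}_C$; then closure of $\mathcal{C}$ under $\land$ (noted in the definition) finishes the case.

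The step I expect to be the main obstacle is the application case, and specifically verifying that $\ty{\sigma_1\to\cdots\to\sigma_m}{\alpha}\in\mathcal{T}_C$: one must notice that the nested arrow $\sigma_1\to\cdots\to\sigma_m\to\alpha$ is built with the productions $\ty{\mathcal{T}_{NF}}{\mathcal{T}_C}$ of $\mathcal{T}_C$ — i.e.\ the arguments $\sigma_i$ live in $\mathcal{T}_{NF}$ while the spine stays in $\mathcal{T}_C$, bottoming out at $\alpha\in\mathcal{A}$ — and this is exactly why the grammar of $\mathcal{T}_C$ has $\mathcal{T}_{NF}$ on the left of the arrow rather than $\mathcal{U}_C$. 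A secondary subtlety is that $\land$ on contexts combines entries position-wise, so one must invoke that $\mathcal{U}_C$ (hence $\mathcal{C}$) is closed under $\land$ to conclude that the pointwise intersections of $\mathcal{U}_C$-entries remain in $\mathcal{U}_C$; this is precisely the remark attached to the definition of $\mathcal{C}$, so no extra work is needed beyond citing it.
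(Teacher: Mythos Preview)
Your proposal is correct and follows exactly the same approach as the paper, which simply states ``By structural induction on $N$'' without spelling out the cases. You have fleshed out precisely the details the paper omits, including the key observation that $\ty{\sigma_1\to\cdots\to\sigma_m}{\alpha}\in\mathcal{T}_C$ via the production $\ty{\mathcal{T}_{NF}}{\mathcal{T}_C}$ and the use of closure of $\mathcal{C}$ under $\land$.
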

\begin{proof}
By structural induction on $N$.
\end{proof}

\begin{definition}
Let $Im(\mathtt{Infer})$ be defined as the set of pairs $(\Gamma,\sigma)=\mathtt{Infer}(N)$ for some $\beta$-nf $N$.
\end{definition}
\begin{coro}
$Im(\mathtt{Infer}) \subseteq  \mathcal{C}\!\times\!\mathcal{T}_{NF}$.
\end{coro}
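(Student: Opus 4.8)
The corollary follows immediately by combining the preceding two results. The plan is to observe that $Im(\mathtt{Infer})$ is, by definition, exactly the set of pairs $(\Gamma,\sigma)$ of the form $\mathtt{Infer}(N)$ ranging over all $\beta$-nfs $N$. For any such pair, Lemma \ref{lem:infersubset} asserts precisely that $(\Gamma,\sigma)\in\mathcal{C}\times\mathcal{T}_{NF}$. Hence every element of $Im(\mathtt{Infer})$ lies in $\mathcal{C}\times\mathcal{T}_{NF}$, which is the desired inclusion.

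Concretely, I would write: let $(\Gamma,\sigma)\in Im(\mathtt{Infer})$. By Definition of $Im(\mathtt{Infer})$, there is a $\beta$-nf $N$ with $\mathtt{Infer}(N)=(\Gamma,\sigma)$. By Lemma \ref{lem:infersubset}, $(\Gamma,\sigma)\in\mathcal{C}\times\mathcal{T}_{NF}$. Since $(\Gamma,\sigma)$ was arbitrary, $Im(\mathtt{Infer})\subseteq\mathcal{C}\times\mathcal{T}_{NF}$.

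There is no real obstacle here: the corollary is a one-line consequence of Lemma \ref{lem:infersubset} once the definition of $Im(\mathtt{Infer})$ is unfolded. The only "work" — the structural induction on $N$ showing that each recursive case of $\mathtt{Infer}$ produces a context built from $\mathcal{U}_C$-entries and a type in $\mathcal{T}_{NF}$ — has already been discharged in the proof of Lemma \ref{lem:infersubset} (the base case $N\equiv\dbi{n}$ gives $\alpha\in\mathcal{T}_{NF}$ and $\omega^{\dbi{n-1}}.\alpha.nil\in\mathcal{C}$; the abstraction case uses closure of $\mathcal{T}_{NF}$ under $\ty{v}{\,\cdot\,}$ together with $\ty{\omega}{\,\cdot\,}$; the application case uses closure of $\mathcal{C}$ under $\land$ and the fact that $\ty{\sigma_1\to\cdots\to\sigma_m}{\alpha}\in\mathcal{T}_C\subseteq\mathcal{U}_C$). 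So the proof of the corollary is simply: \emph{Immediate from Lemma \ref{lem:infersubset} and the definition of $Im(\mathtt{Infer})$.}
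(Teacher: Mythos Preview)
Your proposal is correct and matches the paper's approach exactly: the corollary is stated without proof because it is immediate from Lemma~\ref{lem:infersubset} together with the definition of $Im(\mathtt{Infer})$, which is precisely the argument you give.
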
 
We use the usual notion of {\bf positive} and {\bf negative} occurrences of type variables  and of {\bf final occurrences} for elements $u\in\mathcal{U}$ (see \cite{Kri93}). For contexts, the positive and negative occurrences are the respective occurrences in the types forming the contexts' sequences. 
\begin{definition}\label{def:Gtypes}
Let $\Gamma \in \mathcal{C}$ and $\varphi\in\mathcal{T}_{NF}$. The {\bf $\mathbf{\mathcal{C}}$-types} $T$ are defined by: 
\begin{tabular}{c}
$ T ::= \cty{\Gamma}{\varphi}  \,|\, \cty{\Delta}{}  \quad \mbox{s.t.}\;|\Delta| > 0$
\end{tabular}
\end{definition}
Note that, for any $\beta$-nf $N$, $\mathtt{Infer}(N)$ has a unique corresponding $\mathcal{C}$-type $T^{N}$. The corresponding $A$-types in \cite{SM96a} are defined by taking the set of multisets associated to an environment and transforming them in a single multiset used on the left hand of $\Rightarrow$. Thus, for an environment $A$ and type $\tau$, $\cty{\overline{A}}{\tau}$ is the $A$-type with $\overline{A}$ being the multiset obtained from $A$. On Definition \ref{def:Gtypes} above the sequential structure of contexts are preserved.
\begin{definition}\label{def:held}
Let $T=\cty{\Gamma}{\varphi}$ be a $\mathcal{C}$-type, $T'$ is {\bf held} in $T$ if $T'=\cty{\Gamma'}{}$ or $\cty{\Gamma'}{\varphi}$, such that $\Gamma = \ity{\Gamma'}{\Delta}$ for $\Gamma'\neq \omega^{\dbi{n}}$ and some context $\Delta$. If $T'\neq T$ then $T'$ is {\bf strictly held} in $T$.
\end{definition}
Observe that on Definition \ref{def:held} above  we have that $\Gamma'$ can be $nil$ for $T' = \cty{\Gamma'}{\varphi}$ and $\Delta = \omega^{\dbi{n}}$ for any $n\leq |\Gamma|$ when $\Gamma'=\Gamma$.

\begin{definition}\label{def:leftst}
The set $L(T)$ of the {\bf left subtypes} for some $\mathcal{C}$-type $T$ is defined by structural induction:
\begin{itemize}
\item[-] $L(\cty{\Gamma}{})=L(\Gamma)$.

\item[-] $L(\cty{\Gamma}{\varphi})=L(\Gamma)\cup L(\varphi)$.
 
\item[-]$L(v.\Gamma) = \{v\}\cup L(\Gamma)$ if $v\neq\omega$ and $ L(\Gamma)$ otherwise.

\item[-]$L(nil) = \emptyset$.

\item[-] $L(\ty{v}{\varphi}) = \{v\}\cup L(\varphi)$ if $v\neq\omega$ and $L(\varphi)$ otherwise.

\item[-] $L(\alpha) = \emptyset$.

\end{itemize}
\end{definition}

The notion of sign of occurrences for type variable are straightforward extended to $\mathcal{C}$-types, where the polarity changes on the left side of $\Rightarrow$. We have that $TV(\cty{\Gamma}{\varphi}) = TV(\Gamma)\cup TV(\varphi)$.

\begin{definition}\label{def:closedGt}
A $\mathcal{C}$-type $T$ is {\bf closed} if each $\alpha\in TV(T)$ has exactly one positive and one negative occurrences in $T$.
\end{definition}

\begin{lemma}\label{lem:closedprop}
\begin{enumerate}
\item\label{closed1}
$\cty{v.\Gamma}{\varphi}$ is closed iff $\cty{\Gamma}{\ty{v}{\varphi}}$ is closed.

\item\label{closed2}
$\cty{nil}{\varphi}$ is closed iff $\cty{nil}{\ty{\omega}{\varphi}}$ is closed.

\item\label{closed3}
If $\forall 1 \!\leq\! i\!\leq\! m$, $T_i = \cty{\Gamma^i}{\varphi_i}$ is closed and $TV(T_i)$ are pairwise disjoint then, for any fresh type variable $\alpha$,  $\cty{\ity{(\omega^{\dbi{n-1}}.\ty{\varphi_1\to\cdots\to\varphi_m}{\alpha}.nil)}{\Gamma^1\!\land\cdots\land\Gamma^m}}{\alpha}$ is closed.

\end{enumerate}
\end{lemma}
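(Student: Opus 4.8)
The plan is to prove the three parts of Lemma~\ref{lem:closedprop} by unfolding the definition of \emph{closed} (Definition~\ref{def:closedGt}) and tracking how the positive/negative occurrences of each type variable are redistributed when one rearranges a $\mathcal{C}$-type. The key observation for all three parts is that the constructions involved (moving a $v$ from the context side to the arrow prefix of $\varphi$, adjoining $\omega$, or assembling several closed $\mathcal{C}$-types around a fresh $\alpha$) neither create nor destroy occurrences of any existing type variable, and preserve their signs. Once that bookkeeping is made precise, each equivalence or implication is immediate.

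For part~\ref{closed1}, the first step is to recall that in the $\mathcal{C}$-type $\cty{v.\Gamma}{\varphi}$ the sign convention flips polarity on the left of $\Rightarrow$, so a variable occurring in $v$ with a given sign (read inside $v\in\mathcal{U}_C$) contributes the \emph{opposite} sign to the $\mathcal{C}$-type. Passing to $\cty{\Gamma}{\ty{v}{\varphi}}$, the same $v$ now sits on the left of an internal arrow $\ty{v}{\varphi}$, which again flips polarity relative to its reading in $\varphi$; hence the net contribution of $v$ to the overall $\mathcal{C}$-type is unchanged. The occurrences coming from $\Gamma$ and from $\varphi$ are untouched. Therefore $TV(\cty{v.\Gamma}{\varphi}) = TV(\cty{\Gamma}{\ty{v}{\varphi}})$ and for every $\alpha$ in this set the count of positive and of negative occurrences is the same in both $\mathcal{C}$-types, so one is closed iff the other is. Part~\ref{closed2} is the degenerate case $v=\omega$: since $\omega$ carries no type variables, $TV(\cty{nil}{\varphi})=TV(\cty{nil}{\ty{\omega}{\varphi}})=TV(\varphi)$ and all occurrences are exactly those of $\varphi$, with identical signs; the equivalence follows trivially. (Alternatively, both parts~\ref{closed1} and~\ref{closed2} can be phrased as a single statement about $\cty{u.\Gamma}{\varphi}$ versus $\cty{\Gamma}{\ty{u}{\varphi}}$ for arbitrary $u\in\mathcal{U}_C$, which is how I would actually write it to avoid duplication.)

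For part~\ref{closed3}, I would argue as follows. Write $\Gamma = \ity{(\omega^{\dbi{n-1}}.\ty{\varphi_1\to\cdots\to\varphi_m}{\alpha}.nil)}{\Gamma^1\land\cdots\land\Gamma^m}$ and $T = \cty{\Gamma}{\alpha}$. First, $TV(T) = \{\alpha\}\cup\bigcup_{i=1}^m\bigl(TV(\Gamma^i)\cup TV(\varphi_i)\bigr) = \{\alpha\}\cup\bigcup_{i=1}^m TV(T_i)$, using $TV(T_i)=TV(\Gamma^i)\cup TV(\varphi_i)$ and the freshness of $\alpha$. The fresh variable $\alpha$ occurs exactly twice in $T$: once as the final type variable of the arrow type $\varphi_1\to\cdots\to\varphi_m\to\alpha$ sitting on the context side (a positive occurrence there, flipped to negative by the $\Rightarrow$, i.e.\ in the end it is one occurrence of each sign once we also count the $\alpha$ on the right of $\Rightarrow$, which is positive) --- more carefully: the $\alpha$ on the right of $\Rightarrow$ is positive, and the $\alpha$ at the tail of the arrow block is positive \emph{within} that type but the whole block occurs on the left of $\Rightarrow$, hence negative in $T$; so $\alpha$ has exactly one positive and one negative occurrence. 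For each $\beta\in TV(T_i)$: by disjointness it occurs in no $T_j$ with $j\neq i$ and is not $\alpha$, so all its occurrences in $T$ come from the copy of $\Gamma^i$ inside the context and from the copy of $\varphi_i$ inside the arrow block $\varphi_1\to\cdots\to\varphi_m\to\alpha$. The occurrences from $\Gamma^i$ keep whatever sign they had in $T_i=\cty{\Gamma^i}{\varphi_i}$ (they are on the context side in both). The occurrences from $\varphi_i$: in $T_i$, $\varphi_i$ is on the right of $\Rightarrow$; in $T$, $\varphi_i$ appears as the $i$-th argument of a curried arrow whose whole block is on the left of $\Rightarrow$, so it has been flipped twice and again keeps its sign from $T_i$. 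Hence the positive/negative counts of $\beta$ in $T$ equal those in $T_i$, which are $1$ and $1$ since $T_i$ is closed. So every variable of $T$ has exactly one positive and one negative occurrence, i.e.\ $T$ is closed.

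I expect the main obstacle to be purely notational: making the sign-tracking in part~\ref{closed3} airtight, in particular getting the double-flip argument for the $\varphi_i$'s right and being careful that the curried arrow $\varphi_1\to\cdots\to\varphi_m\to\alpha$ really does place each $\varphi_i$ in a negative position of that type so that, sitting on the left of $\Rightarrow$, the net effect is sign-preserving. There is no real mathematical difficulty — the statement is a structural invariant — but it is the kind of claim where an off-by-one in the polarity count would invalidate the later use of this lemma (closedness of principal typings), so I would state explicitly the sign conventions from Definition~\ref{def:closedGt} and the remark preceding it before doing the counts. Parts~\ref{closed1} and~\ref{closed2} I expect to dispatch in two or three lines each.
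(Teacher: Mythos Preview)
Your proposal is correct and follows essentially the same approach as the paper's own proof: in each part you track that the rearrangement preserves the set of type variables and the sign of every occurrence, which is exactly what the paper does (more tersely). Your part~\ref{closed3} spells out in detail the double-flip argument for the $\varphi_i$'s that the paper compresses into the single remark that ``the type variable occurrences in $\Gamma^i$ and $\varphi_i$ have exactly the same sign on both $T_i$ and $T$''; this extra care is harmless and arguably clearer.
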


\begin{proof}
\begin{enumerate}
\item  Let $T=\cty{v.\Gamma}{\varphi}$ and $T'=\cty{\Gamma}{\ty{v}{\varphi}}$. Note that $TV(T) = TV(T')$ and that the sign for type variable occurrences in $v$ for both $T$ and $T'$ are exactly the same.

\item analogous to the proof above.

\item Let $T=\cty{\ity{(\omega^{\dbi{n-1}}.\ty{\varphi_1\to\cdots\to\varphi_m}{\alpha}.nil)}{\Gamma^1\!\land\cdots\land\Gamma^m}}{\alpha}$. Since $TV(T_i)$ are pairwise disjoint, $TV(T) \!=\! \cup_{i=1}^{m}TV(T_i) \cup \{\alpha\}$ and $T$ has exactly two occurrences of each type variable. Note that $\forall 1{\leq}i{\leq}m$ the type variable occurrences in $\Gamma^i$ and $\varphi_i$ have exactly the same sign on both $T_i$ and $T$ and that $\alpha$ has one positive and one negative occurrence in $T$. Hence, $T$ is closed.\qedhere 
\end{enumerate}
\end{proof}

\begin{definition}\label{def:fclosedGt}
A $\mathcal{C}$-type $T = \cty{\Gamma}{\varphi}$ is {\bf finally closed}, f.c. for short, if the final occurrence of $\varphi$ is also the final occurrence of a type in $L(T)$.
\end{definition}

\begin{lemma}\label{lem:fclosedprop}
\begin{enumerate}
\item\label{fc1}
$\cty{v.\Gamma}{\varphi}$ is finally closed iff $\cty{\Gamma}{\ty{v}{\varphi}}$ is finally closed.

\item\label{fc2}
$\cty{nil}{\varphi}$ is finally closed iff $\cty{nil}{\ty{\omega}{\varphi}}$ is finally closed.

\end{enumerate}
\end{lemma}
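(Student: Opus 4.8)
The plan is to reduce both equivalences to a single bookkeeping observation: moving the leading entry of a context across the $\Rightarrow$ (respectively prefixing an $\omega$-arrow on the right) does not change the collection of left subtypes, and it does not change the final occurrence of the right-hand type. This parallels the proof of Lemma \ref{lem:closedprop}, where the same ``move across $\Rightarrow$'' argument is used, there for positive/negative occurrences rather than final ones.

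For part \ref{fc1}, I would first compute $L(\cty{v.\Gamma}{\varphi})$ and $L(\cty{\Gamma}{\ty{v}{\varphi}})$ directly from Definition \ref{def:leftst}, splitting on whether $v=\omega$. When $v\neq\omega$ both evaluate to $\{v\}\cup L(\Gamma)\cup L(\varphi)$, and when $v=\omega$ both evaluate to $L(\Gamma)\cup L(\varphi)$; in either case the two left-subtype collections coincide occurrence by occurrence. Next I would note that, since the final occurrence of an arrow type is by definition the final occurrence of its target, the final occurrence of $\ty{v}{\varphi}$ is exactly the final occurrence of $\varphi$. Combining these two facts, the condition defining ``$\cty{v.\Gamma}{\varphi}$ is finally closed'' --- that the final occurrence of $\varphi$ is also the final occurrence of some type in $L(\cty{v.\Gamma}{\varphi})$ --- is word for word the condition defining ``$\cty{\Gamma}{\ty{v}{\varphi}}$ is finally closed''.

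For part \ref{fc2} I would proceed the same way: $L(\cty{nil}{\varphi})=L(nil)\cup L(\varphi)=L(\varphi)$ and $L(\cty{nil}{\ty{\omega}{\varphi}})=L(nil)\cup L(\ty{\omega}{\varphi})=L(\varphi)$, using $L(nil)=\emptyset$ and the $\omega$-clause $L(\ty{\omega}{\varphi})=L(\varphi)$; and again the final occurrence of $\ty{\omega}{\varphi}$ is the final occurrence of $\varphi$. Hence both $\mathcal{C}$-types are finally closed under exactly the same condition.

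I do not expect a genuine obstacle here; the only point needing a little care is to read ``final occurrence'' as an occurrence (a node in the syntax tree) rather than merely as a type, so that the final occurrence of $\varphi$ appearing on the right of $\cty{v.\Gamma}{\varphi}$ and the final occurrence of $\varphi$ sitting inside $\ty{v}{\varphi}$ on the right of $\cty{\Gamma}{\ty{v}{\varphi}}$ may legitimately be identified --- which is precisely what the standard definition of final occurrence for arrow types delivers.
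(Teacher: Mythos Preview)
Your proposal is correct and follows essentially the same route as the paper: show $L(T)=L(T')$ by a case split on whether $v=\omega$, observe that the final occurrence of $\ty{v}{\varphi}$ coincides with that of $\varphi$, and conclude; part~\ref{fc2} is treated analogously. The paper's proof is in fact slightly terser than yours but the argument is identical.
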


\begin{proof}
\begin{enumerate}
\item  Let $T \!=\! \cty{v.\Gamma}{\varphi}$ and $T' \!=\! \cty{\Gamma}{\ty{v}{\varphi}}$. The final occurrence of $\ty{v}{\varphi}$ is the same as of $\varphi$. If $v\!\neq\!\omega$, by Definition \ref{def:leftst}, $L(T) {=} L(v.\Gamma) \!\cup\! L(\varphi){=} \{v\} \!\cup\! L(\Gamma) \!\cup\! L(\varphi) {=} L(\Gamma) \!\cup\! L(\ty{v}{\varphi}) {=} L(T')$. Otherwise, $L(T){=} L(\omega.\Gamma) \!\cup\! L(\varphi){=} L(\Gamma) \!\cup\! L(\varphi) {=} L(\Gamma) \!\cup\! L(\ty{\omega}{\varphi}) {=} L(T')$. Hence, $T$ is f.c. iff $T'$ is f.c.

\item analogous to the proof above.\qedhere

\end{enumerate}
\end{proof}

\begin{definition}\label{def:mclosedGt}
A $\mathcal{C}$-type $T$ is {\bf minimally closed}, m.c. for short, if there is no closed $T'$ strictly held in $T$.
\end{definition}
\begin{lemma}\label{lem:mclosedprop}
\begin{enumerate}
\item\label{mc1}
If $\cty{v.\Gamma}{\varphi}$ is m.c. for $v\neq\omega$, then $\cty{\Gamma}{\ty{v}{\varphi}}$ is m.c.

\item\label{mc2}
$\cty{\omega.\Gamma}{\varphi}$ is m.c. iff $\cty{\Gamma}{\ty{\omega}{\varphi}}$ is m.c.

\item\label{mc3}
$\cty{nil}{\varphi}$ is m.c. iff $\cty{nil}{\ty{\omega}{\varphi}}$ is m.c.

\item\label{mc4}
If $\forall 1 \!\leq\! i \!\leq\! m$, $T_i = \cty{\Gamma^i}{\varphi_i}$ is m.c. and $TV(T_i)$ are pairwise disjoint then, for any fresh type variable $\alpha$, $T = \cty{\ity{(\omega^{\dbi{n-1}}.\ty{\varphi_1\to\cdots\to\varphi_m}{\alpha}.nil)}{\Gamma^1\!\land\cdots\land\Gamma^m}}{\alpha}$ is m.c..
\end{enumerate}
\end{lemma}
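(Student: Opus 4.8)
The plan is to prove the four statements by the same strategy used for the corresponding properties of closedness (Lemma~\ref{lem:closedprop}) and final closedness (Lemma~\ref{lem:fclosedprop}): reduce each claim about ``minimally closed'' to a statement about which $\mathcal{C}$-types are \emph{held} in the given $\mathcal{C}$-type, and then invoke the already-established invariance of closedness under the shape operations. For items~\ref{mc1}--\ref{mc3}, which move a leading context entry $v$ across the $\Rightarrow$ (turning $\cty{v.\Gamma}{\varphi}$ into $\cty{\Gamma}{\ty{v}{\varphi}}$, and similarly with $nil$/$\omega$), the key observation is: the $\mathcal{C}$-types strictly held in $\cty{\Gamma}{\ty{v}{\varphi}}$ are exactly those of the form $\cty{\Gamma''}{}$ with $\Gamma = \ity{\Gamma''}{\Delta}$, $\Gamma''\neq\omega^{\dbi{k}}$, whereas those strictly held in $\cty{v.\Gamma}{\varphi}$ are of the form $\cty{\Gamma'}{}$ or $\cty{\Gamma'}{\varphi}$ with $v.\Gamma=\ity{\Gamma'}{\Delta}$, $\Gamma'\neq\omega^{\dbi{k}}$. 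I would show that for each such $T'$ strictly held in $\cty{\Gamma}{\ty{v}{\varphi}}$ there is a corresponding $T'$-with-one-more-entry strictly held in $\cty{v.\Gamma}{\varphi}$ which is closed iff the original is (using Lemma~\ref{lem:closedprop}.\ref{closed1} or \ref{closed2} to transfer closedness), and conversely. Hence no closed type is strictly held on one side iff none is on the other, which is exactly m.c.

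For item~\ref{mc1} the asymmetry (only one direction) comes from the fact that when we pass from $\cty{v.\Gamma}{\varphi}$ to $\cty{\Gamma}{\ty{v}{\varphi}}$ with $v\neq\omega$, every type strictly held in the latter lifts to one strictly held in the former, but a type strictly held in $\cty{v.\Gamma}{\varphi}$ of the form $\cty{\Gamma'}{}$ with $v.\Gamma = \ity{\Gamma'}{\Delta}$ and the $v$-entry landing in $\Gamma'$ (rather than in $\Delta$) has no counterpart on the other side, since $v$ has been absorbed into the target type $\ty{v}{\varphi}$, which does not participate in the ``held'' relation for $\cty{\Gamma}{\ty{v}{\varphi}}$; such a $\cty{\Gamma'}{}$ could be closed without threatening minimal closedness of $\cty{\Gamma}{\ty{v}{\varphi}}$. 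So only the implication stated holds in general, matching Lemma~\ref{lem:mclosedprop}.\ref{mc1}. For item~\ref{mc2} the entry being moved is $\omega$, which by Definition~\ref{def:held} can never be the first entry of a held context's $\Gamma'$ (those are required to be $\neq\omega^{\dbi{n}}$), so this spurious case disappears and the equivalence is restored; the $nil$ case, item~\ref{mc3}, is the degenerate version of the same fact and I would just note $|nil|=0$ forces both $\mathcal{C}$-types to have no $\cty{\Delta}{}$ component held, so m.c.\ reduces to having no closed $\cty{nil}{\varphi'}$ strictly held, which is the same condition on both sides by Lemma~\ref{lem:closedprop}.\ref{closed2}.

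For item~\ref{mc4}, the construction mirrors Lemma~\ref{lem:closedprop}.\ref{closed3}. Write $\Gamma = \ity{(\omega^{\dbi{n-1}}.\ty{\varphi_1\to\cdots\to\varphi_m}{\alpha}.nil)}{\Gamma^1\!\land\cdots\land\Gamma^m}$ and suppose toward a contradiction that some closed $T'$ is strictly held in $T=\cty{\Gamma}{\alpha}$. Because $T'$ is a held $\mathcal{C}$-type, it is either $\cty{\Gamma'}{}$ or $\cty{\Gamma'}{\alpha}$ with $\Gamma = \ity{\Gamma'}{\Delta}$ and $\Gamma'\neq\omega^{\dbi{k}}$. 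The crux is to analyse where the distinguished entry $\ty{\varphi_1\to\cdots\to\varphi_m}{\alpha}$ (at position $n$) goes. If it goes into $\Delta$, then $\Gamma'$ is a $\land$ of sub-selections of the $\Gamma^i$ only, and since the $TV(T_i)$ are pairwise disjoint and $\alpha$ is fresh, $T'$ being closed would force each type variable of $\Gamma'$ to have a matching opposite occurrence inside $\Gamma'$ itself; projecting onto a single $T_i$'s variables yields a closed $\mathcal{C}$-type strictly held in $T_i$ (after moving its context entries across via items \ref{mc1}--\ref{mc3} / Lemma~\ref{lem:closedprop}), contradicting $T_i$ m.c.\ --- unless the selection is empty, but then $\Gamma' = \omega^{\dbi{k}}$, excluded. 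If instead the distinguished entry goes into $\Gamma'$, then $\alpha$ occurs (negatively, in final position) in $\Gamma'$; for $T'$ to be closed the matching positive occurrence of $\alpha$ must also be in $T'$, and the only positive occurrence of $\alpha$ in $T$ is the right-hand $\alpha$ of $T = \cty{\Gamma}{\alpha}$, forcing $T' = \cty{\Gamma'}{\alpha}$ with $\Gamma' = \Gamma$ (all entries of each $\Gamma^i$ and all of $\omega^{\dbi{n-1}}$ must be present, since their non-$\omega$ entries carry variables needing matches), i.e.\ $T'=T$, contradicting ``strictly''. Hence no closed $T'$ is strictly held in $T$, so $T$ is m.c.

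\textbf{Main obstacle.} The bookkeeping in item~\ref{mc4}: formalising ``projecting a closed held type onto the variables of one $T_i$ gives a closed type strictly held in $T_i$'' requires care, because the held-relation on $T_i$ is phrased in terms of $T_i$'s own context $\Gamma^i$, not the merged $\Gamma$, and because one must correctly track how an intersection $\ity{\Gamma^1}{\cdots}\land\Gamma^m$ decomposes position-by-position (a selection of a prefix-pattern in the merged context need not be a ``nice'' selection in each component). Getting the freshness/disjointness hypotheses to do exactly the work of ruling out cross-component matches, and handling the $L(T)$/final-occurrence side-conditions implicit in Definition~\ref{def:held} via Lemma~\ref{lem:closedprop}, is where the real proof effort lies; items~\ref{mc1}--\ref{mc3} are routine once the held-relation is described explicitly.
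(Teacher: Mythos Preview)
Your overall strategy matches the paper's: for items~\ref{mc1}--\ref{mc3} you set up a correspondence between the $\mathcal{C}$-types strictly held on each side and transfer closedness via Lemma~\ref{lem:closedprop}; for item~\ref{mc4} you suppose a closed $T'$ is strictly held, split on where the distinguished entry $\ty{\varphi_1\to\cdots\to\varphi_m}{\alpha}$ lands, and project onto a single $T_i$ using disjointness of the $TV(T_i)$ --- exactly the argument the paper sketches. Your explanation of \emph{why} item~\ref{mc1} is only an implication (the ``$v$-entry landing in $\Gamma'$'' case having no counterpart) is in fact more informative than the paper, which simply does not attempt the converse.

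There is, however, a slip in your description of the held relation that would leave a gap if carried through literally. You write that the $\mathcal{C}$-types strictly held in $\cty{\Gamma}{\ty{v}{\varphi}}$ are \emph{exactly} those of the form $\cty{\Gamma''}{}$. By Definition~\ref{def:held} they are of the form $\cty{\Gamma''}{}$ \emph{or} $\cty{\Gamma''}{\ty{v}{\varphi}}$ (the right-hand side of the ambient $\mathcal{C}$-type is carried along). The paper's proof of item~\ref{mc1} treats this second case first and most explicitly: if $T''=\cty{\Gamma''}{\ty{v}{\varphi}}$ is strictly held in $\cty{\Gamma}{\ty{v}{\varphi}}$, then $T'''=\cty{v.\Gamma''}{\varphi}$ is strictly held in $\cty{v.\Gamma}{\varphi}$, and Lemma~\ref{lem:closedprop}.\ref{closed1} gives ``$T''$ closed iff $T'''$ closed''. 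Without this case your argument for item~\ref{mc1} is incomplete. A related inaccuracy appears in your justification of item~\ref{mc2}: the condition $\Gamma'\neq\omega^{\dbi{n}}$ in Definition~\ref{def:held} forbids $\Gamma'$ from being \emph{entirely} $\omega$'s, not from having $\omega$ as its first entry; the correct observation (which the paper uses) is that any $\Gamma'$ with $\omega.\Gamma=\ity{\Gamma'}{\Delta}$ and $|\Gamma'|>0$ must have $\Gamma'_1=\omega$, so $\Gamma'=\omega.\Gamma''$ and the correspondence $\cty{\omega.\Gamma''}{\varphi}\leftrightarrow\cty{\Gamma''}{\ty{\omega}{\varphi}}$ (and analogously for the context-only form) is a bijection. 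Once these two points are fixed, your proof coincides with the paper's.
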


\begin{proof}
\begin{enumerate}
\item Let $T = \cty{v.\Gamma}{\varphi}$ be m.c. for $v\!\neq\!\omega$ and let $T'=\cty{\Gamma}{\ty{v}{\varphi}}$. Let $T''$ be strictly held in $T'$. If $T'' = \cty{\Gamma'}{\ty{v}{\varphi}}$ then $T''' = \cty{v.\Gamma'}{\varphi}$ is strictly held in $T$. By Lemma \ref{lem:closedprop}.\ref{closed1}, $T''$ is closed iff $T'''$ is closed. Thus, since $T$ is m.c., $T''$ cannot be closed. If $T'' \!=\! \cty{\Gamma'}{}$ then one has similarly that $T''$ cannot be closed. Hence, $T'$ is m.c..

\item Let $T$ be strictly held in $\cty{\omega.\Gamma}{\varphi}$. One has that $T = \cty{\omega.\Gamma'}{\varphi}$ is strictly held in $\cty{\omega.\Gamma}{\varphi}$ iff $T'=\cty{\Gamma'}{\ty{\omega}{\varphi}}$ is strictly held in $\cty{\Gamma}{\ty{\omega}{\varphi}}$. There is a corresponding $T'$ for $T = \cty{nil}{\varphi}$ and for $T=\cty{\omega.\Gamma'}{}$. Therefore, by Lemma \ref{lem:closedprop}.\ref{closed1}, there is a closed $T$ strictly held in $\cty{\omega.\Gamma}{\varphi}$ iff there is a closed $T'$ strictly held in $\cty{\Gamma}{\ty{\omega}{\varphi}}$.

\item analogous to the proof above.

\item Let $T'$ be held in $T$ defined above and suppose that $T'$ is closed. If $T' = \cty{\Gamma'}{}$ then, since $|\Gamma'|\!>\!0$, $\Gamma' = \ity{\Delta^i}{\Gamma''}$ for some $i$ s.t. $\Gamma^i = \ity{\Delta^i}{\Delta'}$, $|\Delta^i|\!>\!0$. Note that $TV(\Gamma^i)$ are pairwise disjoint, thus if $\Delta^i\!\neq\! \Gamma^i$ ($\Delta' \!\neq\! nil$) then $\cty{\Delta^i}{}$ would be closed and strictly held in $T^i$. Hence, $\Delta^i\!=\!\Gamma^i$ ($\Delta' \!=\! nil$) and similarly $\ty{\varphi_1\to\cdots\to\varphi_m}{\alpha}$ must be in $\Gamma'$, giving a non closed $\mathcal{C}$-type $T'$. If $T'=\cty{\Gamma'}{\alpha}$ then with a similar argument one has that $\Gamma'\!=\!\ity{(\omega^{\dbi{n-1}}.\ty{\varphi_1\to\cdots\to\varphi_m}{\alpha}.nil)}{\Gamma^1\!\land\cdots\land\Gamma^m}$. Therefore, $T'$ is closed iff $T$ is closed and $T'\!=\!T$. Hence, $T$ is m.c.\qedhere
\end{enumerate}
\end{proof}

\begin{definition}\label{def:completeGt}
A $\mathcal{C}$-type $T$ is called {\bf complete} if $T$ is closed, finally closed and minimally closed.
\end{definition} 
\begin{lemma}\label{lem:completeprop}
\begin{enumerate}
\item\label{comp1}
If $\cty{v.\Gamma}{\varphi}$ is complete for $v\neq\omega$ then $\cty{\Gamma}{\ty{v}{\varphi}}$ is complete.

\item\label{comp2}
$\cty{\omega.\Gamma}{\varphi}$ is complete iff $\cty{\Gamma}{\ty{\omega}{\varphi}}$ is complete.

\item\label{comp3}
$\cty{nil}{\varphi}$ is complete iff $\cty{nil}{\ty{\omega}{\varphi}}$ is complete.

\item\label{comp4}
If $\forall 1{\leq} i {\leq} m$, $T_i = \cty{\Gamma^i}{\varphi_i}$ is complete and $TV(T_i)$ are pairwise disjoint then,  for any fresh type variable $\alpha$, $T = \cty{\ity{(\omega^{\dbi{n-1}}.\ty{\varphi_1\to\cdots\to\varphi_m}{\alpha}.nil)}{\Gamma^1\!\land\cdots\land\Gamma^m}}{\alpha}$ is complete.

\end{enumerate}
\end{lemma}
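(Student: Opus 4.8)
The plan is to read off \textbf{complete} as the conjunction of the three independent conditions fixed by Definition~\ref{def:completeGt}---closed, finally closed, minimally closed---each of which has already been shown to propagate through the same three constructions (Lemmas~\ref{lem:closedprop}, \ref{lem:fclosedprop} and \ref{lem:mclosedprop}). So for items~\ref{comp1}--\ref{comp3} the proof is just ``take the conjunction of the matching items''. Item~\ref{comp1}: from $\cty{v.\Gamma}{\varphi}$ complete with $v\neq\omega$ we get closedness of $\cty{\Gamma}{\ty{v}{\varphi}}$ by Lemma~\ref{lem:closedprop}.\ref{closed1}, final closedness by Lemma~\ref{lem:fclosedprop}.\ref{fc1}, and minimal closedness by Lemma~\ref{lem:mclosedprop}.\ref{mc1} (this is exactly the one-way implication needed, and it is where the hypothesis $v\neq\omega$ is used). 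Items~\ref{comp2} and~\ref{comp3} are biconditionals obtained the same way: for item~\ref{comp2}, read with $v=\omega$ Lemma~\ref{lem:closedprop}.\ref{closed1}, Lemma~\ref{lem:fclosedprop}.\ref{fc1} and Lemma~\ref{lem:mclosedprop}.\ref{mc2}; for item~\ref{comp3}, use Lemma~\ref{lem:closedprop}.\ref{closed2}, Lemma~\ref{lem:fclosedprop}.\ref{fc2} and Lemma~\ref{lem:mclosedprop}.\ref{mc3}. All six facts cited there are stated as ``iff'', so the conjunction is again an ``iff''.

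For item~\ref{comp4}, write $T=\cty{\ity{(\omega^{\dbi{n-1}}.\ty{\varphi_1\to\cdots\to\varphi_m}{\alpha}.nil)}{\Gamma^1\!\land\cdots\land\Gamma^m}}{\alpha}$. Closedness of $T$ is Lemma~\ref{lem:closedprop}.\ref{closed3} and minimal closedness of $T$ is Lemma~\ref{lem:mclosedprop}.\ref{mc4}; both have precisely the hypotheses assumed here (each $T_i$ closed/m.c.\ and the $TV(T_i)$ pairwise disjoint, which follow from each $T_i$ being complete). The missing ingredient is final closedness, for which Lemma~\ref{lem:fclosedprop} has no application clause, so it must be argued directly. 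But the argument is immediate: the right-hand type of $T$ is the fresh variable $\alpha$, which is atomic, so its final occurrence is $\alpha$ itself; and $\alpha$ is also the final occurrence of the type $\varphi_1\to\cdots\to\varphi_m\to\alpha$ placed in the $n$-th slot of the context, hence the final occurrence of a type in $L(T)$. By Definition~\ref{def:fclosedGt} this makes $T$ finally closed, and together with the two previous facts $T$ is complete.

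The routine items~\ref{comp1}--\ref{comp3} present no obstacle. The only point requiring a little care is the final-closedness step of item~\ref{comp4}: one has to verify via Definition~\ref{def:leftst} that the $n$-th context slot really does contribute the arrow $\varphi_1\to\cdots\to\varphi_m\to\alpha$ to $L(T)$---using that each $\varphi_i\in\mathcal{T}_{NF}$ is never $\omega$, so no conjunct is erased, and that when this slot is a proper conjunction $\alpha$ is still the final occurrence of one of its conjuncts---together with the degenerate case $m=0$, where $\Gamma^1\!\land\cdots\land\Gamma^m=nil$ and the slot is literally $\alpha$. I expect this small bookkeeping to be the main, and essentially the only, obstacle in the proof.
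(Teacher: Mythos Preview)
Your proposal is correct and matches the paper's proof essentially line for line: items~\ref{comp1}--\ref{comp3} are obtained by conjoining the corresponding parts of Lemmas~\ref{lem:closedprop}, \ref{lem:fclosedprop}, \ref{lem:mclosedprop}, and for item~\ref{comp4} the paper likewise invokes Lemmas~\ref{lem:closedprop}.\ref{closed3} and \ref{lem:mclosedprop}.\ref{mc4} and then argues final closedness directly by observing that the $n$-th context entry $\ity{(\ty{\varphi_1\to\cdots\to\varphi_m}{\alpha})}{(\Gamma^1\!\land\cdots\land\Gamma^m)_n}$ lies in $L(T)$ with $\alpha$ as a final occurrence. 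Your extra remarks about the conjunction in that slot and the degenerate case $m=0$ only make explicit what the paper leaves implicit.
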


\begin{proof}
\begin{enumerate}
\item By Lemmas \ref{lem:closedprop}.\ref{closed1}, \ref{lem:fclosedprop}.\ref{fc1} and \ref{lem:mclosedprop}.\ref{mc1}.

\item By Lemmas \ref{lem:closedprop}.\ref{closed1}, \ref{lem:fclosedprop}.\ref{fc1} and \ref{lem:mclosedprop}.\ref{mc2}.

\item By Lemmas \ref{lem:closedprop}.\ref{closed2}, \ref{lem:fclosedprop}.\ref{fc2} and \ref{lem:mclosedprop}.\ref{mc3}.

\item By Lemmas \ref{lem:closedprop}.\ref{closed3} and \ref{lem:mclosedprop}.\ref{mc4} one has that the $T$ described above is respectively closed and m.c. Note that $\ity{(\ty{\varphi_1\to\cdots\to\varphi_m}{\alpha})}{(\Gamma^1\!\land\cdots\land\Gamma^m)_n} \!\in\! L(T)$, thus $T$ is f.c.\qedhere 

\end{enumerate}
\end{proof}

\begin{lemma}\label{lem:inferxcomplete}
If $N$ is a $\beta$-nf then $T^{N}$ is complete.
\end{lemma}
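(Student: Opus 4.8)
The plan is to prove the statement by structural induction on the $\beta$-nf $N$, following the three possible shapes of a $\beta$-normal form and, in each case, reducing the claim to the matching item of Lemma~\ref{lem:completeprop}. Throughout, recall that $T^{N}=\cty{\Gamma}{\sigma}$ where $(\Gamma,\sigma)=\mathtt{Infer}(N)$, and that by Lemma~\ref{lem:infersubset} we always have $(\Gamma,\sigma)\in\mathcal{C}\times\mathcal{T}_{NF}$, so $T^{N}$ really is a $\mathcal{C}$-type and ``complete'' is meaningful for it.

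For the base case $N\equiv\dbi{n}$ we have $\mathtt{Infer}(\dbi{n}\,)=(\omega^{\dbi{n-1}}.\alpha.nil,\alpha)$, hence $T^{N}=\cty{\omega^{\dbi{n-1}}.\alpha.nil}{\alpha}$, and I would check the three conditions by hand. It is closed: $\alpha$ occurs once on the right of $\Rightarrow$ (a positive occurrence) and once inside the context, which by the polarity convention for contexts is a negative occurrence. It is finally closed: $L(T^{N})=\{\alpha\}$, and the final occurrence of that $\alpha$ is also the final occurrence of the right-hand side. It is minimally closed: a $\mathcal{C}$-type strictly held in $T^{N}$ is obtained by $\land$-splitting off a proper prefix context, which is either of the excluded all-$\omega$ form $\omega^{\dbi{k}}$ or else loses one of the two occurrences of $\alpha$; in the latter case the resulting $\mathcal{C}$-type ($\cty{nil}{\alpha}$ or $\cty{\omega^{\dbi{n-1}}.\alpha.nil}{}$) is not closed.

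For the abstraction case $N\equiv\lambda.N'$, set $(\Gamma',\sigma)=\mathtt{Infer}(N')$, so by the induction hypothesis $T^{N'}=\cty{\Gamma'}{\sigma}$ is complete. If $\Gamma'=u.\Gamma$ then $\mathtt{Infer}(\lambda.N')=(\Gamma,\ty{u}{\sigma})$, and $T^{N}=\cty{\Gamma}{\ty{u}{\sigma}}$ is complete by Lemma~\ref{lem:completeprop}.\ref{comp1} when $u\neq\omega$ and by Lemma~\ref{lem:completeprop}.\ref{comp2} when $u=\omega$. If $\Gamma'=nil$ then $\mathtt{Infer}(\lambda.N')=(nil,\ty{\omega}{\sigma})$, and $T^{N}=\cty{nil}{\ty{\omega}{\sigma}}$ is complete by Lemma~\ref{lem:completeprop}.\ref{comp3}. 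For the application case $N\equiv\dbi{n}\,N_1\cdots N_m$, the recursive calls give $(\Gamma^{i},\sigma_i)=\mathtt{Infer}(N_i)$ for $1\leq i\leq m$; since these subcalls do not overlap, they return pairs with pairwise disjoint sets of type variables, so the $TV(T^{N_i})$ are pairwise disjoint. By the induction hypothesis each $T^{N_i}=\cty{\Gamma^{i}}{\sigma_i}$ is complete, and since $\mathtt{Infer}(N)=(\ity{(\omega^{\dbi{n-1}}.\ty{\sigma_1\to\cdots\to\sigma_m}{\alpha}.nil)}{\Gamma^1\land\cdots\land\Gamma^m},\alpha)$ for a fresh type variable $\alpha$, Lemma~\ref{lem:completeprop}.\ref{comp4} immediately yields that $T^{N}$ is complete.

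The substantive content has already been isolated in Lemma~\ref{lem:completeprop}, so the only points needing attention are bookkeeping: carrying out the direct base-case verification above, and confirming that the freshness discipline of $\mathtt{Infer}$ indeed supplies the pairwise-disjointness hypothesis of Lemma~\ref{lem:completeprop}.\ref{comp4} (it does, since two non-overlapping runs of $\mathtt{Infer}$ return pairs with disjoint sets of type variables). I do not expect any real obstacle beyond these.
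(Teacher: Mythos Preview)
Your proposal is correct and follows essentially the same approach as the paper's own proof: structural induction on $N$, with the base case $\dbi{n}$ verified by hand (closed, finally closed via $L(T^N)=\{\alpha\}$, minimally closed since the only strictly held $\mathcal{C}$-types are $\cty{\omega^{\dbi{n-1}}.\alpha.nil}{}$ and $\cty{nil}{\alpha}$), and the inductive cases dispatched by the corresponding items of Lemma~\ref{lem:completeprop}, invoking the freshness discipline of $\mathtt{Infer}$ for the pairwise-disjointness hypothesis in the application case.
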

\begin{proof}
By structural induction on $N$.
\begin{itemize}
\item Let $N \equiv \dbi{n}$. One has that $\mathtt{Infer}(N)=(\omega^{\dbi{n{-}1}}\,.\alpha.nil,\alpha)$, hence $T^{N}= \cty{\omega^{\dbi{n{-}1}}\,.\alpha.nil}{\alpha}$. Note that $L(T^N) = \{\alpha\}$. Thus, $T^{N}$ is closed and finally closed. The only two $\mathcal{C}$-types strictly held in $T^{N}$ are $\cty{\omega^{\dbi{n{-}1}}\,.\alpha.nil}{}$ and $\cty{nil}{\alpha}$ which are not closed, hence $T^N$ is minimally closed.

\item Let $N \equiv \lambda.N'$. If $(\Gamma',\varphi)  \!=\! \mathtt{Infer}(N')$ then, by IH, $T^{N'} \!=\! \cty{\Gamma'}{\varphi}$ is complete.

If $\Gamma' \!=\! v.\Gamma$ then $\mathtt{Infer}(\lambda.N') \!=\! (\Gamma,\ty{v}{\varphi})$ and $T^N \!=\! \cty{\Gamma}{\ty{v}{\varphi}}$. If $v\!\neq\! \omega$, then by Lemma \ref{lem:completeprop}.\ref{comp1} $T^N$ is complete. Otherwise, by Lemma \ref{lem:completeprop}.\ref{comp2}, $T^N$ is complete.

If $\Gamma' \!=\! nil$ then $\mathtt{Infer}(\lambda.N') \!=\! (nil,\ty{\omega}{\varphi})$ and, by Lemma \ref{lem:completeprop}.\ref{comp3}, $T^N$ is complete.

\item Let $N \equiv \dbi{n}\,N_1\cdots N_m$. If $\forall 1{\leq}i{\leq}m$, $(\Gamma^i,\varphi_{i}) \!=\! \mathtt{Infer}(N_i)$ then, by IH, $T^{N_i}$ is complete. Observe that $TV(T^{N_i})$ are pairwise disjoint because they correspond to disjoint calls of $\mathtt{Infer}$. One has that $\mathtt{Infer}(N) \!=\! (\ity{(\omega^{\dbi{n{-}1}}\,.\ty{\varphi_1\to\cdots\to\varphi_m}{\alpha}.nil)}{\Gamma^1\!\land \cdots \land \Gamma^m},\alpha)$, for some fresh type variable $\alpha$. Thus, by Lemma \ref{lem:completeprop}.\ref{comp4}, $T^N$ is complete.\qedhere

\end{itemize}
\end{proof}

Note that on items \ref{comp1} and \ref{comp4} in Lemma \ref{lem:completeprop} we only have {\it sufficiency} proofs. Following we give counterexamples for each {\it necessary} condition.

\begin{example}\label{exemp:mc}
Let $T = \cty{\Gamma}{\varphi}$ be complete. Then, for any fresh $\alpha\!\in\!\mathcal{A}$, take $T' = \cty{\Gamma}{(\ty{\alpha}{\alpha})\!\to\!\varphi}$. Therefore, $T'$ is complete but $\cty{\ty{\alpha}{\alpha}.\Gamma}{\varphi}$ is not m.c.
\end{example}

\begin{example}\label{exemp:comp}
Let $T\!=\!\cty{\beta_1 \!\to\! (\ty{\beta_2}{\beta_3}) \!\to\! \beta_4.(\ty{\beta_1}{\beta_4})\!\to\!(\ty{\beta_3}{\beta_2})\!\to\!\alpha.nil}{\alpha}$. Note that $T$ is complete but there is no such a partition of complete $\mathcal{C}$-types.
\end{example} 
Hence, to have complete $\mathcal{C}$-types which satisfy those {\it necessary} conditions, we present the notion of principal $\mathcal{C}$-types, as done in \cite{SM96a}.
\begin{definition}\label{def:principal}
Let $T$ be a complete $\mathcal{C}$-type. $T$ is called principal if:
\begin{itemize}
\item[-] $T= \cty{\omega^{\dbi{n{-}1}}\,.\alpha.nil}{\alpha}$.

\item[-] $T = \cty{nil}{\ty{\omega}{\varphi}}$ and $\cty{nil}{\varphi}$ is principal.

\item[-]  $T = \cty{\Gamma}{\ty{v}{\varphi}}$ such that either $\Gamma\neq nil$ or $v\neq\omega$ and $\cty{v.\Gamma}{\varphi}$ is principal.

\item[-] $T= \cty{\Gamma}{\alpha}$ and there are $\Gamma^1,\dots,\Gamma^m \!\in\! \mathcal{C}$ and $n\!\in\!\mathbb{N}^*$ such that $\Gamma = \ity{(\omega^{\dbi{n{-}1}}\,.\ty{\varphi_1\to\cdots\to\varphi_m}{\alpha}.nil)}{\Gamma^1\!\land\cdots\land\Gamma^m}$ and $\forall 1{\leq}i{\leq}m$, $\cty{\Gamma^i}{\varphi_i}$ is principal.

\end{itemize} 
\end{definition}

Observe that in Definition \ref{def:principal} above we explicitly require  the existence of the corresponding partition in the case $T\!=\!\cty{\Gamma}{\alpha}$ for $\Gamma \!\neq\! \omega^{\dbi{n{-}1}}\,.\alpha.nil$ and that $\cty{v.\Gamma}{\varphi}$ is also principal thus complete for $T \!=\! \cty{\Gamma}{\ty{v}{\varphi}}$ such that $\Gamma\!\neq\!nil$ or $v\!\neq\!\omega$. Although we have that, by Lemma \ref{lem:completeprop}.\ref{comp2}, $T\!=\!\cty{nil}{\ty{\omega}{\varphi}}$ is complete iff $T'\!=\!\cty{nil}{\varphi}$ is complete, this case has to be defined similarly. If in Definition \ref{def:principal} we only have instead: ``$T = \cty{nil}{\ty{\omega}{\varphi}}$'' then we would guarantee only the completeness of $T'$, letting a counterexample as in Example \ref{exemp:mc} to be presented.

\begin{lemma}\label{lem:inferxprincipal}
If $N$ is a $\beta$-nf then $T^N$ is principal.
\end{lemma}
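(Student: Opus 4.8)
The plan is to prove Lemma \ref{lem:inferxprincipal} by structural induction on the $\beta$-nf $N$, mirroring exactly the case analysis used in the proof of Lemma \ref{lem:inferxcomplete} and appealing to the characterisation of $\beta$-nfs (variable, abstraction, or head-variable applied to $\beta$-nfs). In each case we already know from Lemma \ref{lem:inferxcomplete} that $T^N$ is complete, so the only new content is to check that the extra clauses in Definition \ref{def:principal} — the explicit existence of the partition and the principality (not merely completeness) of the auxiliary $\mathcal{C}$-types — are satisfied; these follow directly from the inductive hypothesis applied to the subterms, because $\mathtt{Infer}$ builds $T^N$ in precisely the shape dictated by Definition \ref{def:principal}.

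In the base case $N \equiv \dbi{n}$ we have $\mathtt{Infer}(N) = (\omega^{\dbi{n-1}}\,.\alpha.nil,\alpha)$, so $T^N = \cty{\omega^{\dbi{n-1}}\,.\alpha.nil}{\alpha}$ matches the first clause of Definition \ref{def:principal} verbatim, hence is principal. For $N \equiv \lambda.N'$, let $(\Gamma',\varphi) = \mathtt{Infer}(N')$; by IH, $T^{N'} = \cty{\Gamma'}{\varphi}$ is principal. If $\Gamma' = v.\Gamma$ then $T^N = \cty{\Gamma}{\ty{v}{\varphi}}$ with $\cty{v.\Gamma}{\varphi} = T^{N'}$ principal, and (by Lemma \ref{lem:noweak}, since $v = \Gamma'_1$ corresponds to a free index only when $\dbi{1}\in FI(N')$, and $\mathtt{Infer}$ returns $\omega$ exactly otherwise) we are in the third clause; one must just note that when $v = \omega$ necessarily $\Gamma \neq nil$, so the side condition ``$\Gamma\neq nil$ or $v\neq\omega$'' holds. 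If $\Gamma' = nil$ then $T^N = \cty{nil}{\ty{\omega}{\varphi}}$ with $\cty{nil}{\varphi} = T^{N'}$ principal, which is the second clause. For $N \equiv \dbi{n}\,N_1\cdots N_m$, with $(\Gamma^i,\varphi_i) = \mathtt{Infer}(N_i)$, the IH gives each $T^{N_i} = \cty{\Gamma^i}{\varphi_i}$ principal, and $\mathtt{Infer}(N) = (\ity{(\omega^{\dbi{n-1}}\,.\ty{\varphi_1\to\cdots\to\varphi_m}{\alpha}.nil)}{\Gamma^1\land\cdots\land\Gamma^m},\alpha)$ exhibits precisely the partition required by the fourth clause, so $T^N$ is principal.

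The main obstacle — really the only subtlety — is the abstraction case with $\Gamma' = v.\Gamma$ and $v = \omega$: one must argue that the side condition ``$\Gamma \neq nil$ or $v \neq \omega$'' in the third clause of Definition \ref{def:principal} is met, i.e. that $\Gamma$ cannot be $nil$ when $v = \omega$. This is where Lemma \ref{lem:noweak} (relevance) is used: if $\mathtt{Infer}(N') = (\omega.nil,\varphi)$ then $sup(N') = 1$ with $\dbi{1}\notin FI(N')$, contradicting $|\Gamma'| = sup(N')$ together with the index-context correspondence; more directly, an easy side induction shows $\mathtt{Infer}$ never returns a context of the form $\omega^{\dbi{k}}\,.nil$ with $k>0$, because its leftmost non-$\omega$ entry always sits at position $sup(N')$. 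Granting that, every case lands exactly in one clause of Definition \ref{def:principal}, and the induction closes.
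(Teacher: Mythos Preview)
Your proof is correct and follows essentially the same approach as the paper's own proof: structural induction on the $\beta$-nf $N$, invoking Lemma~\ref{lem:inferxcomplete} for completeness and matching each case against a clause of Definition~\ref{def:principal}, with Lemma~\ref{lem:noweak} (via soundness of $\mathtt{Infer}$) handling the side condition in the abstraction case. The paper phrases the subtlety contrapositively (``if $\Gamma=nil$ then $v\neq\omega$'') and more tersely, but the argument is the same.
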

\begin{proof}
By structural induction on $N$. By Lemma \ref{lem:inferxcomplete}, $T^N$ is complete:
\begin{itemize}
\item If $N \equiv \dbi{n}$ then $T^{N} = \cty{\omega^{\dbi{n{-}1}}\,.\alpha.nil} {\alpha}$.

\item  Let $N \equiv \lambda.N'$ and $T^{N'} \!=\!  \cty{\Gamma'}{\varphi}$. By IH $T^{N'}$ is principal. 

If $\Gamma' \!=\! v.\Gamma$ then $T^{\lambda.N'}  \!=\!  \cty{\Gamma}{\ty{v}{\varphi}}$. If $\Gamma \!=\! nil$ then,  by Lemma \ref{lem:noweak}, $v \!\neq\! \omega$. Hence, $T^{\lambda.N'} $ is principal.

Otherwise $T^{\lambda.N'} \!=\! \cty{nil}{\ty{\omega}{\varphi}}$, hence $T^{\lambda.N'}$ is principal.

\item  Let $N \equiv \dbi{n}\,N_1\cdots N_m$ and $\forall 1{\leq}i{\leq}m$, $T^{N_i}  =  \cty{\Gamma^i}{\varphi_{i}}$. Hence,  for some fresh type variable $\alpha$, $T^{N} = \cty{\ity{(\omega^{\dbi{n{-}1}}\,.\ty{\varphi_1\to\cdots\to\varphi_m}{\alpha}.nil)}{\Gamma^1\!\land \cdots \land \Gamma^m}}{\alpha}$ and, by IH, $T^{N_i}$ is principal $\forall 1{\leq}i{\leq}m$. Thus, $T^{N}$ is principal.\qedhere 

\end{itemize}
\end{proof}
Therefore, the syntactic definition of principal $\mathcal{C}$-types contains the PT for $\beta$-nfs returned by $\mathtt{Infer}$.
\begin{definition}
Let $\mathcal{P} = \{(\Gamma,\varphi) \in \mathcal{C}\!\times\!\mathcal{T}_{NF} \,|\, \cty{\Gamma}{\varphi} \; \mbox{is principal}\}$.
\end{definition}
In other words, by Lemma \ref{lem:inferxprincipal} and analogously to \cite{SM96a}: $\;Im(\mathtt{Infer}) \subseteq \mathcal{P}$

\begin{definition}
Let $FO(\alpha,\Gamma)=\{(i,\Gamma_i)\,|\,\alpha \:\mbox{is the final occurrence of}\:\, \Gamma_i, \forall 1{\leq}i{\leq}|\Gamma|\}$.
\end{definition}
The set $FO(\alpha,\Gamma)$ for $T = \cty{\Gamma}{\alpha} $ principal, specifically closed and finally closed, has properties used in the reconstruction algorithm's definition.
\begin{lemma}\label{lem:finalclosedstruct}
Let $T= \cty{\Gamma}{\alpha}$ be a $\mathcal{C}$-type. If $T$ is finally closed then $FO(\alpha,\Gamma)\neq \emptyset$. If $T$ is also closed then $FO(\alpha,\Gamma)$ has exactly one element $(i,v)$, s.t. $v = \ity{(\ty{\varphi_1\to\cdots\to\varphi_m}{\alpha})}{v'}$, for $m \geq 0$ and $\alpha \notin TV(v')$. 
\end{lemma}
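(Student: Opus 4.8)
The plan is to reduce both assertions to elementary occurrence counting, after first spelling out what the definitions say for a $\mathcal{C}$-type of the special shape $\cty{\Gamma}{\alpha}$. First I would unfold $L(\cty{\Gamma}{\alpha})$: by Definition~\ref{def:leftst} it equals $L(\Gamma)\cup L(\alpha) = L(\Gamma)$, and an easy induction on $\Gamma$ gives $L(\Gamma)=\{\Gamma_i \mid 1\le i\le|\Gamma|,\ \Gamma_i\neq\omega\}$. Since the right-hand side $\varphi=\alpha$ is a type variable, its final occurrence is $\alpha$ itself. Hence ``$T$ is finally closed'' says exactly that the variable $\alpha$ is a final occurrence of some $\mathcal{T}_C$-conjunct of some $\Gamma_i$ with $\Gamma_i\neq\omega$; in particular $\alpha$ is a final occurrence of $\Gamma_i$, so $(i,\Gamma_i)\in FO(\alpha,\Gamma)$ and the first claim follows.

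For the second claim I would add the closedness hypothesis. Note $\alpha\in TV(T)=TV(\Gamma)\cup\{\alpha\}$, and the right-hand-side occurrence of $\alpha$ in $T$ (the final occurrence of $\varphi=\alpha$) is a positive occurrence of $\alpha$ in $T$. By Definition~\ref{def:closedGt}, $\alpha$ then has this as its only positive occurrence and exactly one negative occurrence, which cannot be the right-hand side and hence lies inside $\Gamma$; so $\alpha$ occurs in $\Gamma$ exactly once, namely in a single entry $\Gamma_i$ and only once within it. Combining with the previous paragraph (whose witnessing entry $\Gamma_j$ must coincide with this $\Gamma_i$ by uniqueness of the $\Gamma$-occurrence of $\alpha$), and writing $\Gamma_i=\rho_1\land\cdots\land\rho_n$ with $n\ge1$ and each $\rho_k\in\mathcal{T}_C$, the lone occurrence of $\alpha$ sits in one conjunct $\rho_l$ as its final occurrence; iterating the grammar defining $\mathcal{T}_C$ (analogously to Lemma~\ref{lem:typeshape}.\ref{lem:Tshape}) this forces $\rho_l=\ty{\varphi_1\to\cdots\to\varphi_m}{\alpha}$ with $m\ge0$ and $\varphi_1,\dots,\varphi_m\in\mathcal{T}_{NF}$. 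Setting $v'=\rho_1\land\cdots\land\rho_{l-1}\land\rho_{l+1}\land\cdots\land\rho_n$ (read as $\omega$ when $n=1$) we obtain $\Gamma_i = v = \ity{(\ty{\varphi_1\to\cdots\to\varphi_m}{\alpha})}{v'}$, with $\alpha\notin TV(v')$ (and likewise $\alpha\notin TV(\varphi_j)$) because $\alpha$ occurs only once in $\Gamma_i$. Finally, $FO(\alpha,\Gamma)=\{(i,v)\}$: any pair in $FO(\alpha,\Gamma)$ requires $\alpha$ to occur in the corresponding entry of $\Gamma$, and $\alpha$ occurs in $\Gamma$ only once.

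I expect the delicate points to be about conventions rather than content: making sure that the right-hand-side $\alpha$ really counts as the unique positive occurrence in $T$ — so that the remaining occurrence is pinned to a single context slot — and that ``$\alpha$ is the final occurrence of $\Gamma_i$'' is read correctly through the intersection structure of $\Gamma_i\in\mathcal{U}_C$, i.e.\ as ``$\alpha$ is the final occurrence of one of the $\mathcal{T}_C$-conjuncts of $\Gamma_i$''. Once those readings are fixed, the argument is just structural unfolding of Definitions~\ref{def:leftst}, \ref{def:closedGt} and \ref{def:fclosedGt} together with the shape of $\mathcal{T}_C$.
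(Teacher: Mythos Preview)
Your proposal is correct and follows essentially the same route as the paper's proof: compute $L(\cty{\Gamma}{\alpha})=\{\Gamma_i\neq\omega\}$ to get the first claim, then use closedness to pin the unique $\Gamma$-occurrence of $\alpha$ to a single slot $\Gamma_i$ and decompose $\Gamma_i\in\mathcal{U}_C$ into its $\mathcal{T}_C$-conjuncts to extract the shape $\ity{(\ty{\varphi_1\to\cdots\to\varphi_m}{\alpha})}{v'}$. You are in fact more explicit than the paper about polarities and about reading ``final occurrence'' through the intersection structure, which is exactly the point the paper leaves to the reader.
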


\begin{proof}
Let $T \!=\! \cty{\Gamma}{\alpha}$. By Definition \ref{def:leftst}, $L(T)\!=\!\{\Gamma_i{\neq}\omega, \forall 1{\leq}i{\leq}|\Gamma|\}$, hence if $T$ is f.c. then at least one element of $\Gamma$ has $\alpha$ as its final occurrence. Let $(i,v)\!\in\!FO(\alpha,\Gamma)$. If $T$ is also closed then $\Gamma$ has exactly one positive occurrence of $\alpha$, hence $\alpha$ occurs uniquely in $v{=}\Gamma_i$. Note that $v\!\in\!\mathcal{U}_C$. If $v\!\in\!\mathcal{T}_{C}$ then by induction on its structure $v \!=\! \ty{\varphi_1\to\cdots\to\varphi_m}{\alpha}$ for $m{\geq}0$ ($v\!=\!\alpha$ if $m\!=\!0$). Otherwise, $v = \ity{v_1}{v_2}$ and $\alpha$ occurs positively either in $v_1$ or in $v_2$. Thus, by induction on the structure of elements in $\mathcal{U}_C$, commutativity and associativity of $\land$, the result holds.
\end{proof}
We introduce the algorithm $\mathtt{Recon}$, to reconstruct a $\beta$-nf $N$ from $(\Gamma,\varphi)\in \mathcal{P}$ such that $\mathtt{Infer}(N)=(\Gamma,\varphi)$, similar to the algorithm introduced in \cite{SM96a}.
\begin{definition}[Reconstruction algorithm].\\
{\small
$\mathtt{Recon}(\Gamma,\tau) = $\\

     \begin{tabular}{cl}
      {\bf Case} & $(nil,\alpha)$ \\

                 & {\bf fail}

     \end{tabular}

     \begin{tabular}{cl}
     {\bf Case} & $(\Gamma,\alpha)$ \\

                & {\bf let} $\{(i^1,u_1),\dots,(i^m,u_m)\} = FO(\alpha,\Gamma)$ \\ 

                & {\bf if} $m=1$ and $u_1 = \ity{(\ty{\tau_1\to\cdots\to\tau_n}{\alpha})}{u'}$ s.t. $\alpha\!\notin\! TV(u')$\\
              
                & \hspace{.2cm} {\bf then if} $\forall 1{\leq}i{\leq}n$ there is $\Gamma^i$ s.t. $\Gamma = \ity{\Gamma^i}{X^i}$ and $\cty{\Gamma^i}{\tau_i}$ is principal\\ 
                
                & \hspace{1.6cm} {\bf then let}  $(N_1,\Delta^1) = \mathtt{Recon}(\Gamma^1,\tau_1)$ \\
                &           \hspace{4.4cm} $\vdots$ \\
                &           \hspace{2.95cm} $(N_n,\Delta^n) = \mathtt{Recon}(\Gamma^n,\tau_n)$\\
                & \hspace{3.05cm} $\Delta' = \omega^{\dbi{i^1{-}1}}.\ty{\tau_1\to\cdots\to\tau_n}{\alpha}.nil$\\
                & \hspace{3.05cm} $\Gamma' = \ity{\Delta'}{\Gamma^1 \!\land \cdots\land \Gamma^n}$ \\

                & \hspace{3.05cm} $\Gamma = \ity{\Gamma'}{\Delta}$, s.t. $\Delta\neq\omega^{\dbi{j}}$, $\forall 1{\leq}j{\leq}|\Gamma|$ \\

                & \hspace{1.6cm} {\bf return} $(\dbi{i}^1N_1\,\cdots\,N_n, \ity{\Delta}{\Delta^1 \!\land \cdots\land \Delta^n})$ \\

                & \hspace{1.6cm} {\bf else fail} \\

                & \hspace{.2cm} {\bf else fail} 
 
     \end{tabular}

     \begin{tabular}{cl}
     {\bf Case} & $(\Gamma,\ty{u}{\tau})$ \\

                & {\bf if} $\Gamma=nil$ and $u=\omega$\\

                & \hspace{.2cm} {\bf then let} $(N,\Delta) = \mathtt{Recon}(nil,\tau)$\\

                & \hspace{.2cm} {\bf else let} $(N,\Delta) = \mathtt{Recon}(u.\Gamma,\tau)$ \\
               
                & {\bf if} $\Delta = nil$\\  

                & \hspace{.2cm} {\bf then return} $(\lambda.N,\Delta)$\\

                & \hspace{.2cm} {\bf else fail}
     \end{tabular}
}
\end{definition} 

\begin{lemma}\label{lem:PxInfer}
Let $(\Gamma,\varphi)\in\mathcal{P}$. Then $\mathtt{Recon}(\Gamma,\varphi)=(N,nil)$,  $N$ a $\beta$-nf such that $\mathtt{Infer}(N)=(\Gamma,\varphi)$.
\end{lemma}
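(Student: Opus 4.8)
The plan is to prove the statement by induction on the structure of the principal $\mathcal{C}$-type $\cty{\Gamma}{\varphi}$ as given by the four clauses of Definition~\ref{def:principal}. For each clause I will trace the corresponding branch of $\mathtt{Recon}$ and check three things simultaneously: that $\mathtt{Recon}(\Gamma,\varphi)$ never takes a \textbf{fail} branch, that the second component it returns is $nil$, and that the returned term $N$ is a $\beta$-nf with $\mathtt{Infer}(N)=(\Gamma,\varphi)$ (understood up to renaming of type variables, as noted after Theorem~\ref{teo:sound}). Since any principal $\mathcal{C}$-type is complete, hence closed and finally closed, Lemmas~\ref{lem:closedprop}, \ref{lem:fclosedprop}, \ref{lem:mclosedprop} and especially \ref{lem:finalclosedstruct} are available throughout.

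The two ``abstraction'' clauses and the variable base case are routine. For $T=\cty{\omega^{\dbi{n-1}}.\alpha.nil}{\alpha}$, $\mathtt{Recon}$ enters Case $(\Gamma,\alpha)$ with $FO(\alpha,\Gamma)=\{(n,\alpha)\}$, so $m=1$ and $u_1=\alpha$ matches the required shape $\ity{(\ty{}{\alpha})}{\omega}$ with no domain types; the inner test over an empty range holds vacuously, $\Gamma'=\Gamma$ forces the residual context to be $nil$, and the call returns $(\dbi{n},nil)$ with $\mathtt{Infer}(\dbi{n})=(\omega^{\dbi{n-1}}.\alpha.nil,\alpha)$. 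For $T=\cty{nil}{\ty{\omega}{\varphi}}$ with $\cty{nil}{\varphi}$ principal, $\mathtt{Recon}$ enters Case $(\Gamma,\ty{u}{\tau})$, takes the branch $\Gamma=nil$, $u=\omega$, and calls $\mathtt{Recon}(nil,\varphi)$; by the induction hypothesis this returns $(N',nil)$ with $N'$ a $\beta$-nf and $\mathtt{Infer}(N')=(nil,\varphi)$, so $\mathtt{Recon}$ returns $(\lambda.N',nil)$, and since $\mathtt{Infer}(N')$ has empty context, $\mathtt{Infer}(\lambda.N')=(nil,\ty{\omega}{\varphi})$. For $T=\cty{\Gamma}{\ty{v}{\varphi}}$ with $\Gamma\neq nil$ or $v\neq\omega$ and $\cty{v.\Gamma}{\varphi}$ principal, the guard ``$\Gamma=nil$ and $u=\omega$'' fails, so $\mathtt{Recon}$ calls $\mathtt{Recon}(v.\Gamma,\varphi)$; the induction hypothesis on the principal type $(v.\Gamma,\varphi)\in\mathcal{P}$ yields $(N',nil)$ with $\mathtt{Infer}(N')=(v.\Gamma,\varphi)$, whence $\mathtt{Recon}$ returns $(\lambda.N',nil)$ and, because $v.\Gamma$ is a nonempty list, $\mathtt{Infer}(\lambda.N')=(\Gamma,\ty{v}{\varphi})$.

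The substantial case is the application clause: $T=\cty{\Gamma}{\alpha}$ with $\Gamma=\ity{(\omega^{\dbi{n-1}}.\ty{\varphi_1\to\cdots\to\varphi_m}{\alpha}.nil)}{\Gamma^1\land\cdots\land\Gamma^m}$ and each $\cty{\Gamma^i}{\varphi_i}$ principal. Here $\mathtt{Recon}$ enters Case $(\Gamma,\alpha)$, not the \textbf{fail} case since $|\Gamma|\ge n\ge 1$. As $T$ is closed and finally closed, Lemma~\ref{lem:finalclosedstruct} gives $FO(\alpha,\Gamma)=\{(i^1,u_1)\}$ with $u_1=\ity{(\ty{\tau_1\to\cdots\to\tau_k}{\alpha})}{u'}$, $\alpha\notin TV(u')$; since the final (and, by closedness, unique) occurrence of $\alpha$ in $\Gamma$ is the target of the arrow sitting in position $n$, this forces $i^1=n$, $k=m$, $\tau_j=\varphi_j$, and pins the decomposition of $u_1$. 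It then remains to show that the search ``there is $\Gamma^i$ with $\Gamma=\ity{\Gamma^i}{X^i}$ and $\cty{\Gamma^i}{\tau_i}$ principal'' succeeds and necessarily recovers the very $\Gamma^i$ of the principal decomposition: this follows because closedness of $T$ together with closedness of each $\cty{\Gamma^i}{\varphi_i}$ forces the sets $TV(\cty{\Gamma^i}{\varphi_i})$ to be pairwise disjoint and disjoint from $\alpha$, which rigidifies the partition. The subcalls $\mathtt{Recon}(\Gamma^i,\varphi_i)$ then land on members of $\mathcal{P}$, so by the induction hypothesis each returns $(N_i,nil)$ with $N_i$ a $\beta$-nf and $\mathtt{Infer}(N_i)=(\Gamma^i,\varphi_i)$. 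Hence $\Delta'=\omega^{\dbi{n-1}}.\ty{\varphi_1\to\cdots\to\varphi_m}{\alpha}.nil$, $\Gamma'=\Gamma$, the residual $\Delta$ is $nil$, and $\mathtt{Recon}$ returns $(\dbi{n}\,N_1\cdots N_m,\,nil)$, a $\beta$-nf; by the definition of $\mathtt{Infer}$ on applications and the disjointness of the $TV(\Gamma^i)$, $\mathtt{Infer}(\dbi{n}\,N_1\cdots N_m)=(\Gamma,\alpha)$.

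The main obstacle is exactly the bookkeeping inside this last clause: proving that the decomposition $\mathtt{Recon}$ extracts from $FO(\alpha,\Gamma)$ and from the ``$\cty{\Gamma^i}{\tau_i}$ principal'' side-condition is forced to coincide with the partition witnessing principality of $T$ — i.e. deriving the needed uniqueness and variable-disjointness statements purely from closedness — since it is precisely this coincidence that lets the induction hypothesis apply to the subcalls and makes $\mathtt{Infer}$ reproduce $(\Gamma,\alpha)$ exactly.
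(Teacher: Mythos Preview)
Your proposal is correct and follows essentially the same approach as the paper, which also proceeds by induction aligned with the recursive calls of $\mathtt{Recon}$ (equivalently, the inductive clauses of Definition~\ref{def:principal}). The paper is terser: it merges your variable and application cases into a single Case $(\Gamma,\alpha)$ split by whether $m=0$, and in the application subcase it simply takes the partition $\Gamma^1,\dots,\Gamma^m$ supplied by the principality witness rather than arguing, as you do, that the algorithm's chosen partition is forced to coincide with it.
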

\begin{proof}
By recurrence on the number of calls to $\mathtt{Recon}$.
\begin{itemize}
\item Case $(\Gamma,\alpha)$. Let $T=\cty{\Gamma}{\alpha}$.

By hypothesis $(\Gamma,\alpha)\in\mathcal{P}$, thus $T$ is principal and in particular closed and f.c.. By Lemma \ref{lem:finalclosedstruct}, $FO(\alpha,\Gamma)\!=\!\{(i,\ity{(\ty{\varphi_1\to\cdots\to\varphi_m}{\alpha})}{v'})\}$ where $\alpha\notin TV(v')$. Since $\Gamma_i$ is the only occurrence of $\alpha$ in $\Gamma$, $\Gamma \!=\! \ity{(\omega^{\dbi{i{-}1}}.\ty{\varphi_1\to\cdots\to\varphi_m}{\alpha}.nil)}{\Delta''}$ s.t. $\alpha\!\notin\!TV(\Delta'')$.

If $m \!=\! 0$, then in $\mathtt{Recon}$ one has $\Gamma'\!=\! \Delta' \!=\!\omega^{\dbi{i{-}1}}.\alpha.nil$, hence $T \!=\! \cty{\ity{\Gamma'}{\Delta''}}{\alpha}$. $T$ is m.c., thus $\Delta'' = nil$ and $\Gamma = \Gamma'$. Then, $\mathtt{Recon}(\Gamma,\alpha) = (\dbi{i}\,,nil)$ and $\mathtt{Infer}(\dbi{i}\,) = (\omega^{\dbi{i{-}1}}\,.\alpha.nil,\alpha)$.

Otherwise, there are $\Gamma^1,\dots,\Gamma^m$ and $n\in\mathbb{N}^*$ s.t. $\Gamma = \ity{(\omega^{\dbi{n{-}1}}\,.\ty{\varphi_1\to\cdots\to\varphi_m}{\alpha}.nil)}{\Gamma^1\!\land\cdots\land\Gamma^m}$ and $\forall 1{\leq}j{\leq}m$, $\cty{\Gamma^j}{\varphi_j}$ is principal. Hence, $n=i$ and by IH $\forall 1{\leq}j{\leq}m$, $\mathtt{Recon}(\Gamma^j,\varphi_j)=(N_j,nil)$, $N_j$ a $\beta$-nf s.t. $\mathtt{Infer}(N_j) = (\Gamma^j,\varphi_j)$. Hence in $\mathtt{Recon}$ one has that $\Gamma \!=\! \Gamma'$, consequently $\Delta \!=\! nil$. Then, $\mathtt{Recon}(\Gamma,\alpha) \!=\! (\dbi{i}\,N_1\,\cdots\,N_m,nil)$ and $\mathtt{Infer}(\dbi{i}\,N_1\cdots N_m) \!=\! (\ity{(\omega^{\dbi{i{-}1}}\,.\ty{\varphi_1\to\cdots\to\varphi_m}{\alpha}.nil)}{\Gamma^1\!\land\cdots\land\Gamma^m},\alpha)$.

\item Case $(\Gamma,\ty{v}{\varphi})$. Let $T= \cty{\Gamma}{\ty{v}{\varphi}}$.

By hypothesis  $(\Gamma,\ty{v}{\varphi})\in\mathcal{P}$, thus  $T$ is principal. 

If $\Gamma=nil$ and $v=\omega$ then $T'= \cty{nil}{\varphi}$ is principal and, by IH, $\mathtt{Recon}(nil,\varphi)=(N,nil)$, $N$ a $\beta$-nf s.t. $\mathtt{Infer}(N)=(nil,\varphi)$.  Thus, $\mathtt{Recon}(nil,\ty{\omega}{\varphi})=(\lambda.N,nil)$ and $\mathtt{Infer}(\lambda.N) = (nil,\ty{\omega}{\varphi})$. 

Otherwise, $T' \!=\! \cty{v.\Gamma}{\varphi}$ is principal. By IH, $\mathtt{Recon}(v.\Gamma,\varphi)\!=\!(N,nil)$, $N$ a $\beta$-nf s.t. $\mathtt{Infer}(N)\! =\! (v.\Gamma,\varphi)$. Hence, $\mathtt{Recon}(\Gamma,\ty{v}{\varphi})=(\lambda.N,nil)$ and $\mathtt{Infer}(\lambda.N) = (\Gamma,\ty{v}{\varphi})$.\qedhere 
\end{itemize}
\end{proof}
Observe that, by Lemma \ref{lem:PxInfer}, we have that:
$\mathcal{P} \subseteq Im(\mathtt{Infer})$.
Thus, $\mathcal{P}$ is the set of all, and only, principal typings for $\beta$-nfs in $SM_r$. Therefore,
$\mathcal{P} = Im(\mathtt{Infer})$.

\section{Conclusion}

In this paper, we introduced the first intersection type system in de
Bruijn indices for which the principle typings property for $\beta$-normal forms holds.

The restriction in the system of \cite{SM96a} prevents both that
system and our own system presented here, from having SR in the usual
sense. This is not the case however for the system of
\cite{VAK2008}. However, every $\beta$-nf is typeable in the
introduced system, as in the one in \cite{SM96a}, a property that does
not hold for the simply typed system. We then prove the PT property
for $\beta$-nfs and a characterisation of PT is given. This de Bruijn
version of the typing system in \cite{SM96a} was introduced as a first
step towards some extended systems in which PT depends on more complex
syntactic operations such as expansion \cite{new}. 

As future work, we will introduce a de Bruijn version for systems such as the ones in \cite{CDV80} and \cite{roc84} and try to add similar systems to both $\lambda\sigma$ and $\lambda s_e$. There are works on intersection types and explicit substitution, e.g. \cite{LLDDvB}, but no work for systems where the composition of substitutions is allowed.

\bibliographystyle{eptcs}

\end{document}